\newtheorem{theorem}{Theorem}[section]
\newtheorem{lemma}[theorem]{Lemma}
\newtheorem{fact}[theorem]{Fact}
\newtheorem{corollary}[theorem]{Corollary}
\newtheorem{definition}[theorem]{Definition}
\newcommand{\R}{\ensuremath{\mathbb{R}}}
\newcommand{\Z}{\ensuremath{\mathbb{Z}}}
\newcommand{\lat}{\mathcal{L}}
\newcommand{\eps}{\varepsilon} 
\renewcommand{\epsilon}{\varepsilon}
\newcommand{\poly}{\mathrm{poly}}
\newcommand{\vol}{\mathrm{vol}}
\DeclareMathOperator{\spn}{span}
\renewcommand{\vec}[1]{\ensuremath{\boldsymbol{#1}}}
\newcommand{\basis}{\ensuremath{\mathbf{B}}}
\DeclarePairedDelimiter\ceil{\lceil}{\rceil}
\newcommand{\ignore}[1]{}
\renewcommand{\vec}[1]{\mathbf{#1}}
\newcommand{\norm}[1]{\Vert #1\Vert }
\newcommand{\mZ}{\mathbb{Z}}
\title{Slide Reduction, Revisited---\\Filling the Gaps in SVP Approximation} 
	\author{
Divesh Aggarwal\thanks{National University of Singapore. E-mail: \texttt{dcsdiva@nus.edu.sg}. 
This research was partially funded by the Singapore Ministry of Education and the National Research Foundation under grant R-710-000-012-135} 
\and Jianwei Li\thanks{National University of Singapore. E-mail: \texttt{lijianweithu@sina.com}.}
\and Phong Q. Nguyen\thanks{Inria and CNRS, JFLI,University of Tokyo. E-mail:\texttt{pnguyen@inria.fr}.}
\and Noah Stephens-Davidowitz\thanks{Massachusetts Institute of Technology. E-mail: \texttt{noahsd@gmail.com}. Supported by Vinod Vaikuntanathan's NSF-BSF grant number 1718161 and NSF CAREER Award number 1350619.}
}
\date{}
\begin{document}

\maketitle

\begin{abstract}
    We show how to generalize Gama and Nguyen's slide reduction algorithm [STOC '08] for solving the approximate Shortest Vector Problem over lattices (SVP). As a result, we show the fastest provably correct algorithm for $\delta$-approximate SVP for all approximation factors $n^{1/2+\eps} \leq \delta \leq n^{O(1)}$. This is the range of approximation factors most relevant for cryptography.
\end{abstract}

	\section{Introduction}
	
	A lattice $\lat \subset \R^m$ is the set of integer linear combinations  
	\[
	\lat := \lat(\basis) = \{z_1 \vec{b}_1 + \cdots + z_n \vec{b}_n \ : \ z_i \in \Z \}
	\]
	of linearly independent basis vectors $\basis = (\vec{b}_1,\ldots, \vec{b}_n) \in \R^{m \times n}$. 
	We call $n$ the \emph{rank} of the lattice.
	
	The Shortest Vector Problem (SVP) is the computational search problem in which the input is (a basis for) a lattice $\lat \subseteq \Z^m$, and the goal is to output a non-zero lattice vector $\vec{y} \in \lat$ with minimal length, $\|\vec{y}\| = \lambda_1(\lat) := \min_{\vec{x}\in \lat_{\neq \vec0}} \|\vec{x}\|$. For $\delta \geq 1$, the $\delta$-approximate variant of SVP ($\delta$-SVP) is the relaxation of this problem in which any non-zero lattice vector $\vec{y}\in \lat_{\neq \vec0}$ with $\|\vec{y}\| \leq \delta \cdot \lambda_1(\lat)$ is a valid solution. 
	
	A closely related problem is \emph{$\delta$-Hermite SVP} ($\delta$-HSVP, sometimes also called Minkowski SVP), which asks us to find a non-zero lattice vector $\vec{y}\in \lat_{\neq \vec0}$ with $\|\vec{y}\| \leq \delta \cdot \mathrm{vol}(\lat)^{1/n}$, where $\mathrm{vol}(\lat) := \det(\basis^T \basis)^{1/2}$ is the covolume of the lattice. \emph{Hermite's constant} $\gamma_n$ is (the square of) the minimal possible approximation factor that can be achieved in the worst case. I.e., 
	\[
	    \gamma_n := \sup \frac{\lambda_1(\lat)^{2}}{\vol(\lat)^{2/n}}
	    \; ,
	\]
	where the supremum is over lattices $\lat \subset \R^n$ with full rank $n$.
	Hermite's constant is only known exactly for $1 \leq n \leq 8$ and $n = 24$, but it is known to be asymptotically linear in $n$, i.e., $\gamma_n = \Theta(n)$. HSVP and Hermite's constant play a large role in algorithms for $\delta$-SVP.
	
	Starting with the celebrated work of Lenstra, Lenstra, and Lov{\'a}sz in 1982~\cite{LLLFactoringPolynomials82}, algorithms for solving $\delta$-(H)SVP for a wide range of parameters $\delta$ have found innumerable applications, including
	factoring polynomials over the rationals~\cite{LLLFactoringPolynomials82}, integer programming~\cite{LenIntegerProgramming83,KanImprovedAlgorithms83,DPVEnumerativeLattice11}, cryptanalysis~\cite{ShaPolynomialtimeAlgorithm84,OdlRiseFall90,JSLatticeReduction98,NSTwoFaces01}, etc. More recently, many cryptographic primitives have been constructed whose security is based on the (worst-case) hardness of $\delta$-SVP or closely related lattice problems \cite{AjtGeneratingHard96,RegLatticesLearning09,GPVTrapdoorsHard08,PeiPublickeyCryptosystems09,PeiDecadeLattice16}. Such lattice-based cryptographic constructions are likely to be used on massive scales (e.g., as part of the TLS protocol) in the not-too-distant future~\cite{NISPostQuantumCryptography18}, and in practice, the security of these constructions depends on the fastest algorithms for $\delta$-(H)SVP, typically for $\delta = \poly(n)$.
	
	Work on $\delta$-(H)SVP has followed two distinct tracks. There has been a long line of work showing progressively faster algorithms for exact SVP (i.e., $\delta = 1$)~\cite{KanImprovedAlgorithms83,AKSSieveAlgorithm01,NVSieveAlgorithms08,PSSolvingShortest09,MVDeterministicSingle13}. However, even the fastest such algorithm (with proven correctness) runs in time $2^{n + o(n)}$~\cite{ADRSSolvingShortest15,ASJustTake18}. So, these algorithms are only useful for rather small $n$.
	
	This paper is part of a separate line of work on \emph{basis reduction algorithms}~\cite{LLLFactoringPolynomials82,SchHierarchyPolynomial87,SELatticeBasis94,GHKNRankinConstant06,GNFindingShort08,HPSAnalyzingBlockwise11,MWPracticalPredictable16}. (See~\cite{NVLLLAlgorithm10} and ~\cite{MWPracticalPredictable16} for a much more complete list of works on basis reduction.) At a high level, these are reductions from $\delta$-(H)SVP on lattices with rank $n$ to exact SVP on lattices with rank $k \leq n$. More specifically, these algorithms divide a basis $\basis$ into projected blocks $\basis_{[i,i+k-1]}$
	with \emph{block size} $k$, where $\basis_{[i,j]}=(\pi_{i}(\mathbf{b}_{i}),
\pi_{i}(\mathbf{b}_{i+1}), \ldots, \pi_{i}(\mathbf{b}_{j}))$ and  $\pi_i$ is the orthogonal projection  onto
the subspace orthogonal to $\vec{b}_1,\ldots, \vec{b}_{i-1}$.
	Basis reduction algorithms use their SVP oracle to find short vectors in these (low-rank) blocks and incorporate these short vectors into the lattice basis $\basis$. By doing this repeatedly (at most $\poly(n, \log \|\basis\|)$ times) with a cleverly chosen sequence of blocks, such algorithms progressively improve the ``quality'' of the basis $\basis$ until $\vec{b}_1$ is a solution to $\delta$-(H)SVP for some $\delta \geq 1$. The goal, of course, is to take the block size $k$ to be small enough that we can actually run an exact algorithm on lattices with rank $k$ in reasonable time while still achieving a relatively good approximation factor $\delta$.
	
	For HSVP, the DBKZ algorithm due to Micciancio and Walter yields the best proven approximation factor for all ranks $n$ and block sizes $k$~\cite{MWPracticalPredictable16}. Specifically, it achieves an approximation factor of 
	\begin{equation}
	\label{eq:MW_intro}
	\delta_{\mathsf{MW},H} := \gamma_k^{\frac{n-1}{2(k-1)}}
	\; .
	\end{equation}
	(Recall that $\gamma_k = \Theta(k)$ is Hermite's constant. Here and throughout the introduction, we have left out low-order factors that can be made arbitrarily close to one.)
	Using a result due to Lov\'asz~\cite{LovAlgorithmicTheory86}, this can be converted into an algorithm for $\delta_{\mathsf{MW},H}^2$-SVP.
	However, the slide reduction algorithm of Gama and Nguyen~\cite{GNFindingShort08} achieves a better approximation factor for SVP. It yields
	\begin{equation}
	   \label{eq:slide_approx_ceil}
	\delta_{\mathsf{GN},H} := \gamma_k^{\frac{\ceil{n}_k-1}{2(k-1)}}
	\qquad 
	\delta_{\mathsf{GN},S} := 
	\gamma_k^{\frac{\ceil{n}_k-k}{k-1}}
	\; ,
	\end{equation}
	for HSVP and SVP respectively, where we write $\ceil{n}_k := k \cdot \ceil{n/k}$ for $n$ rounded up to the nearest multiple of $k$. (We have included the result for HSVP in Eq.~\eqref{eq:slide_approx_ceil} for completeness, though it is clearly no better than Eq.~\eqref{eq:MW_intro}.)

	The discontinuous approximation factor in Eq.~\eqref{eq:slide_approx_ceil} is the result of an unfortunate limitation of slide reduction: it only works when the block size $k$ divides the rank $n$. If $n$ is not divisible by $k$, then we must artificially pad our basis so that it has rank $\ceil{n}_k$, which results in the rather odd expressions in Eq.~\eqref{eq:slide_approx_ceil}. Of course, for $n \gg k$, this rounding has little effect on the approximation factor. But, for cryptographic applications, we are interested in small polynomial approximation factors $\delta \approx n^c$ for relatively small constants $c$, i.e., in the case when $k = \Theta(n)$. For such values of $k$ and $n$, this rounding operation can cost us a constant factor in the exponent of the approximation factor, essentially changing $n^c$ to $n^{\ceil{c}}$. Such constants in the exponent have a large effect on the security of lattice-based cryptography.\footnote{The security of lattice-based cryptography is actually assessed using heuristic algorithms that outperform Eq.~\eqref{eq:slide_approx_ceil} empirically~\cite{APSConcreteHardness15}, so that Eq.~\eqref{eq:slide_approx_ceil} is not and should not be used directly for this purpose. In this work, we restrict our attention to what we can prove.}

\subsection{Our results}

Our first main contribution is a generalization of Gama and Nguyen's slide reduction~\cite{GNFindingShort08} without the limitation that the rank $n$ must be a multiple of the block size $k$. Indeed, we achieve exactly the approximation factor shown in Eq.~\eqref{eq:slide_approx_ceil} without any rounding, as we show below.

As a very small additional contribution, we allow for the possibility that the underlying SVP algorithm for lattices with rank $k$
only solves $\delta$-approximate SVP for some $\delta > 1$.
This technique was already known to folklore and used in practice, and the proof requires no new ideas. Nevertheless, we believe that this work is the first to formally show that a $\delta$-SVP algorithm suffices and
to compute the exact dependence on $\delta$.
(This minor change proves quite useful when we instantiate our $\delta$-SVP subroutine with the $2^{0.802k}$-time $\delta$-SVP algorithm for some large constant $\delta \gg 1$  due to Liu, Wang, Xu, and Zheng~\cite{LWXZShortestLattice11,WLWFindingShortest15}. See Table~\ref{Table:approximat SVP} and Figure~\ref{fig:running_time}.)

\begin{theorem}[Informal, slide reduction for $n \geq 2k$]\label{thm:main_result_intro}
	For any approximation factor $\delta \geq 1$ and block size $k := k(n) \geq 2$, there is an efficient reduction from $\delta_H$-HSVP and $\delta_S$-SVP on lattices with rank $n \geq 2k$  
	to $\delta$-SVP on lattices with rank $k$, 
	where
	\[
	\delta_H := (\delta^2 \gamma_{k})^{\frac{n-1}{2(k-1)}}
	\qquad
	\delta_S :=
	\delta(\delta^2 \gamma_{k})^{\frac{n-k}{k-1}}
	\; .
	\]
\end{theorem}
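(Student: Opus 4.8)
The plan is to generalize Gama--Nguyen slide reduction so that it works for any rank $n \geq 2k$, not just multiples of $k$. The key idea is to split the basis into a ``primal'' part and a ``dual'' part: run the standard slide reduction machinery on the first $\lfloor n/k \rfloor$ blocks of size $k$ (call the reduced prefix $\basis_{[1, \lfloor n/k\rfloor \cdot k]}$), but handle the leftover $r := n - k\lfloor n/k \rfloor < k$ coordinates by a single overlapping block of size $k$ sitting at the \emph{end} of the basis (a ``dual'' HKZ/SVP-reduction step on $\basis_{[n-k+1, n]}$ or its reversed dual). Concretely, I would define a basis to be \emph{slide-reduced} if (i) each of the disjoint primal blocks $\basis_{[ik+1, (i+1)k]}$ is (approximate-)SVP-reduced, i.e.\ $\|\pi_{ik+1}(\vec b_{ik+1})\| \leq \delta \cdot \lambda_1(\lat(\basis_{[ik+1,(i+1)k]}))$; (ii) each of the shifted ``dual'' conditions holds, controlling $\lambda_1$ of the dual of $\basis_{[ik+2,(i+1)k+1]}$; and (iii) a final boundary condition forcing the last block, which overlaps the penultimate one by $k - r$ coordinates, to also be SVP-reduced. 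The algorithm alternates ``primal tours'' and ``dual tours,'' each invoking the rank-$k$ $\delta$-SVP oracle a polynomial number of times, and terminates (by a standard potential-function argument on $\prod \|\gs b_{ik+1}\|$ or on $\sum i \log \|\gs b_{ik+1}\|$) once the reduced conditions are met.

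The analysis then proceeds in two stages. First, I would prove a \emph{local-to-global} bound: assuming (i)--(iii), bound $\|\vec b_1\| = \|\gs b_1\|$ in terms of $\vol(\lat)$ (for HSVP) and in terms of $\lambda_1(\lat)$ (for SVP), purely as an inequality about the Gram--Schmidt profile. For the original $n = k m$ case Gama--Nguyen show the profile of the $\|\gs b_{ik+1}\|$ behaves like a geometric sequence with ratio $(\delta^2\gamma_k)^{-k/(k-1)}$ per block; summing $m$ of these gives the exponent $(n-k)/(k-1)$ for SVP and $(n-1)/(2(k-1))$ for HSVP. The new ingredient is to check that the overlapping final block contributes the ``missing'' fractional piece exactly, so that the telescoping sum gives $\frac{n-1}{2(k-1)}$ rather than $\frac{\lceil n\rceil_k - 1}{2(k-1)}$ --- i.e.\ the last block of size $k$ that straddles coordinates $n-k+1,\ldots,n$ must be shown to tighten the bound on $\|\gs b_{k(m-1)+1}\|$ (where $m = \lceil n/k\rceil$) by just the right amount. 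I would isolate this as a clean lemma about three consecutive Gram--Schmidt norms when the relevant blocks are SVP-/dual-SVP-reduced, using Minkowski's theorem (HSVP with constant $\sqrt{\gamma_k}$) on the block and on its dual, together with the identity $\vol(\lat(\basis_{[i,j]})) = \prod_{\ell=i}^j \|\gs b_\ell\|$.

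Second, I would handle termination and running time: define the potential $\Phi(\basis) = \sum_{i=1}^{m} (m - i)\log\|\gs b_{ik+1}\|$ (or the analogous product), show each oracle call that changes the basis decreases $\Phi$ by at least a fixed polynomial amount while primal/dual conditions that are already satisfied are preserved, and bound the initial potential by $\poly(n, \log\|\basis\|)$ via an LLL preprocessing step. The $\delta > 1$ generalization is absorbed throughout simply by replacing every appeal to ``$\lambda_1$ of a block'' with ``$\delta \cdot \lambda_1$,'' which is why $\gamma_k$ gets replaced by $\delta^2 \gamma_k$ inside the exponents and an extra factor of $\delta$ appears out front in $\delta_S$ (coming from the one final oracle call that produces $\vec b_1$ itself).

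The main obstacle I anticipate is getting the \emph{boundary block} to interact correctly with the rest of the reduction --- both in the analysis (making the telescoping sum land exactly on $\frac{n-1}{2(k-1)}$ with no rounding, which requires the overlap to be exploited in precisely the right inequality) and in the algorithm (ensuring that a tour which fixes up the overlapping last block does not undo conditions (i)/(ii) on earlier blocks, so that the potential argument still gives termination). In the $k \mid n$ case these blocks are disjoint and the interactions are clean; once we allow overlap, the dual conditions on adjacent blocks share a Gram--Schmidt vector with the primal conditions, so the bookkeeping for ``which conditions a given oracle call can violate'' becomes the delicate part. I would expect to need a slightly more careful ordering of operations in the tours (e.g.\ always process the boundary block last in each dual tour) and a correspondingly adapted potential function that still provably decreases.
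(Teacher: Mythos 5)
There is a genuine gap here: your proposed decomposition is not the one that makes the theorem go through, and the substitute you describe runs into a structural obstruction at exactly the point you flag as ``the main obstacle.'' You keep every block of size exactly $k$ and absorb the leftover $q = n - k\lfloor n/k\rfloor$ coordinates via a final block $\basis_{[n-k+1,n]}$ that \emph{overlaps} the penultimate one. But the entire local-to-global analysis (the gluing step, Lemma~\ref{lem:pilingup}) rests on the identity $\vol(\lat) = \vol(\basis_{[1,d]})\cdot\vol(\basis_{[d+1,n]})$ and on chaining the decay $\|\vec{b}_{ik+1}^*\| \le (\delta^2\gamma_k)^{k/(k-1)}\|\vec{b}_{(i+1)k+1}^*\|$ across \emph{disjoint} consecutive blocks via twin reduction. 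Once the last block overlaps the previous one, the Gram--Schmidt norms in the overlap are counted twice in any product of block volumes, and SVP-reducing the $p$-th primal block gives you no control over its \emph{interior} Gram--Schmidt norms $\|\vec{b}_j^*\|$ for $j \in [(p-1)k+2,\, n-k+1]$ --- which is precisely what you would need to connect the penultimate block's head to the overlapping block's head. So the ``missing fractional piece'' of the exponent does not materialize from the overlap; the telescoping sum still lands on $\frac{\ceil{n}_k-1}{2(k-1)}$, not $\frac{n-1}{2(k-1)}$.

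The idea the paper actually uses is different in kind: write $n = pk+q$ and make the \emph{first} block \emph{bigger}, of size $k+q$, keeping all remaining blocks disjoint and of size $k$. Since a rank-$k$ oracle cannot SVP-reduce a rank-$(k+q)$ block, the requirement on that block is relaxed from SVP-reduction to $\eta$-HSVP-reduction with $\eta = (\delta^2\gamma_k)^{\frac{k+q-1}{2(k-1)}}$ --- which is achievable with the rank-$k$ oracle via the Micciancio--Walter DBKZ algorithm (Theorem~\ref{th:SDBKZ}), and which is exactly the quality the gluing lemma needs to produce the unrounded exponent $\frac{n-1}{2(k-1)}$. Your proposal contains neither the oversized-first-block/HSVP relaxation nor the use of DBKZ as a subroutine, and these are the theorem's essential new ingredients. (A secondary omission: when the shortest vector of $\lat$ already lies in $\spn(\basis_{[1,k+q]})$, the HSVP guarantee on the first block does not bound $\|\vec{b}_1\|/\lambda_1(\lat)$, and the paper must handle this degenerate case separately with an extra reduction on $\basis_{[1,2k]}$ as in Corollary~\ref{cor:main2}; your write-up does not address this either.) Your termination/potential discussion and your accounting for where the $\delta^2$ inside the exponent and the leading $\delta$ come from are fine, but they are the routine parts.
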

Notice in particular that this matches Eq.~\eqref{eq:slide_approx_ceil} in the case when $\delta = 1$ and $k$ divides $n$. (This is not surprising, since our algorithm is essentially identical to the original algorithm from~\cite{GNFindingShort08} in this case.) Theorem~\ref{thm:main_result_intro} also matches the approximation factor for HSVP achieved by~\cite{MWPracticalPredictable16}, as shown in Eq.~\eqref{eq:MW_intro}, so that the best (proven) approximation factor for both problems is now achieved by a single algorithm.

However, Theorem~\ref{thm:main_result_intro} only applies for $n \geq 2k$. 
Our second main contribution is an algorithm that works for $k \leq n \leq 2k$. To our knowledge, this is the first algorithm that provably achieves sublinear approximation factors
for SVP and is asymptotically faster than, say, the fastest algorithm for $O(1)$-SVP. (We overcame a small barrier here. See the discussion in Section~\ref{sec:bislide}.)

\begin{theorem}[Informal, slide reduction for $n \leq 2k$]
	\label{thm:bi-slide_intro}
	For any approximation factor $\delta \geq 1$ and block size $k \in [n/2,n]$, there is an efficient reduction from $\delta_S$-SVP on lattices with rank 
	$n$ to $\delta$-SVP on lattices with rank $k$, where 
	\[
	    \delta_S := \delta^2 \sqrt{\gamma_k}  (\delta^2 \gamma_{q})^{\frac{q+1}{q-1} \cdot \frac{n-k}{2k}} 
	    \lesssim \delta (\delta^2 \gamma_k)^{\frac{n}{2k}}
	    \; ,
	\]
	and $q := n-k \leq k$.
\end{theorem}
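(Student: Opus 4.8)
The plan is to build a "bi-block" slide reduction that uses only two overlapping blocks, taking advantage of the fact that when $k \leq n \leq 2k$ the first block $\basis_{[1,k]}$ and the last block $\basis_{[n-k+1,n]}$ together cover the whole basis (they overlap in $\basis_{[n-k+1,k]}$, a block of rank $2k-n$). I would first run a $\delta$-SVP oracle on the first block to make $\vec{b}_1$ short relative to $\vol(\basis_{[1,k]})^{1/k}$, i.e., $\|\vec{b}_1\| \leq \delta\sqrt{\gamma_k}\,\vol(\basis_{[1,k]})^{1/k}$ — this is the HSVP-type guarantee. Then, to control $\vol(\basis_{[1,k]})$, I want to make the complementary block short in the dual sense: apply the $\delta$-SVP oracle (in its HSVP form, via Lov\'asz/Minkowski as in the DBKZ-style analysis) to the reversed dual of the trailing block to drive down $\vol(\basis_{[n-k+1,n]})$, equivalently to push $\vol(\basis_{[1,n-k]})$ up as far as possible relative to $\vol(\lat)$. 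Writing $q := n-k$, the overlap block has rank $q$, and iterating these two operations until a fixed point is reached should yield the relations $\vol(\basis_{[1,k]}) \approx (\delta^2\gamma_q)^{\ast}\cdot(\text{something involving }\vol(\lat))$ — precisely the structure that produces the $(\delta^2\gamma_q)^{(q+1)/(q-1)}$ factor.

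The key steps, in order: (1) Define the notion of a fully bi-slide-reduced basis for $n/2 \leq k \leq n$: $\basis_{[1,k]}$ is $\delta$-SVP-reduced (so $\vec{b}_1$ is short for HSVP on that block), and the trailing block is "dually reduced" so that its covolume is as large as allowed, with the overlap block $\basis_{[q+1,k]}$ (rank $q$) being SVP-reduced on both ends — this mutual condition on the rank-$q$ overlap is what forces a Hermite-type inequality with constant $\gamma_q$. (2) Show such a basis is reachable in polynomially many oracle calls: each call either decreases a suitable potential (e.g., $\prod \|\gs{b}_i\|$ weighted appropriately, as in the standard slide/BKZ potential arguments) by a fixed multiplicative factor, or we terminate; bound $\|\basis\|$ bit-size growth along the way. (3) From the fixed-point relations, chain together the bound $\|\vec{b}_1\| \leq \delta\sqrt{\gamma_k}\vol(\basis_{[1,k]})^{1/k}$ with the dual bound relating $\vol(\basis_{[1,k]})$ to $\vol(\lat)$ and to $\lambda_1(\lat)$ through the rank-$q$ overlap. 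Since $\lambda_1(\lat) \geq \lambda_1(\basis_{[1,k]})$ is false in general, I instead use that any shortest vector $\vec{v}\in\lat$ either lies in $\spn(\basis_{[1,k]})$ — handled by the HSVP bound on that block plus $\vol$ control — or has nonzero projection $\pi_{k}(\vec{v})$ into the trailing part, which is a nonzero vector of the rank-$q$ projected lattice $\lat(\basis_{[k+1,n]})$ and hence has length $\geq \lambda_1$ of that $q$-rank lattice; combining with the dual-reducedness of the overlap closes the loop.

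The main obstacle I anticipate is exactly this last case analysis on where the shortest vector sits, together with getting the exponent $\frac{q+1}{q-1}\cdot\frac{n-k}{2k}$ to come out correctly: the self-referential nature of the overlap block (it must be simultaneously short as the tail of block one and "long" as the head of block two, with the rank-$q$ Hermite constant mediating) is delicate, and this is presumably the "small barrier" alluded to in the reference to Section~\ref{sec:bislide}. I would handle it by first proving a clean lemma: if $\basis_{[1,k]}$ is $\delta$-HSVP-reduced and $\basis_{[q+1,n]}$ is reduced so that $\vol(\basis_{[q+1,k]})^{1/q} \cdot$ (tail covolume) is balanced, then $\vol(\basis_{[1,k]})^{1/k} \leq (\delta^2\gamma_q)^{c}\,\lambda_1(\lat)$ for the right constant $c$, and only then assemble the pieces. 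The final simplification $\delta_S \lesssim \delta(\delta^2\gamma_k)^{n/(2k)}$ follows by crudely bounding $\gamma_q \leq \gamma_k$ and $\frac{q+1}{q-1}\cdot\frac{n-k}{2k} \leq \frac{n}{2k}$ (valid once $q$ is not too small, with the edge cases $q \in \{1,2\}$ checked separately).
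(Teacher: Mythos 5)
You have the right endgame---the paper's reduction (Corollary~\ref{cor:main1}) indeed finishes with one extra $\delta$-SVP call on $\basis_{[1,k]}$, bounded via $\lambda_1(\lat(\basis_{[1,k]})) \le \sqrt{\gamma_k}\,\vol(\basis_{[1,k]})^{1/k}$, and the dichotomy ``either the shortest vector lies in the span of a prefix, or its projection is a short nonzero vector in a reduced tail'' is exactly the degenerate-case handling used there. But the block structure you propose does not supply the inequalities needed to bound $\vol(\basis_{[1,k]})^{1/k}$ against $\lambda_1(\lat)$ (rather than against $\vol(\lat)^{1/n}$) with only a $\gamma_q^{O(q/k)}$ loss. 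Two concrete problems. First, an arithmetic slip that matters: the overlap of $\basis_{[1,k]}$ and $\basis_{[q+1,n]}$ is $\basis_{[q+1,k]}$, which has rank $k-q=2k-n$, not $q$; so your setup contains no natural rank-$q$ block, and it is unclear where the Hermite constant $\gamma_q$ in the target bound would come from. (Minkowski applied to the rank-$q$ projected lattice $\lat(\basis_{[k+1,n]})$ gives an upper bound on its $\lambda_1$, which is the wrong direction for lower-bounding $\lambda_1(\lat)$.) Second, and more fundamentally, a single dichotomy at one index yields only one inequality of the form $\|\vec{b}_i^*\| \le \delta\lambda_1(\lat)$, whereas bounding $\vol(\basis_{[1,k]})=\prod_{i\le k}\|\vec{b}_i^*\|$ requires $k$ such ``units.'' If you instead route everything through one twin-reduced rank-$k$ block (i.e., $\basis_{[1,k]}$ primal and $\basis_{[2,k+1]}$ dual), Fact~\ref{fact:twinsies} gives $\vol(\basis_{[1,k]})^{1/k}\lesssim \sqrt{\gamma_k}\,\delta^2\lambda_1(\lat)$ and hence only $\lambda_1(\lat(\basis_{[1,k]}))\lesssim \gamma_k\lambda_1(\lat)$---this is precisely the linear barrier the theorem is claiming to beat.

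The paper's Definition~\ref{def:slide-reduction} resolves this differently: the \emph{small} prefix $\basis_{[1,q]}$ (rank $q=n-k$) is $\delta$-SVP-reduced and $\basis_{[2,q+1]}$ is $\delta$-DSVP-reduced, so that $\basis_{[1,q+1]}$ is $\delta\sqrt{\gamma_q}$-twin-reduced and Eq.~\eqref{eq:twin_volume} gives $\vol(\basis_{[1,q]})\le(\delta\sqrt{\gamma_q})^{q(q+1)/(q-1)}\|\vec{b}_{q+1}^*\|^{q}$; in addition, \emph{every} suffix $\basis_{[i,n]}$ for $i\in[q+1,k]$ (each of rank at most $k$) is $\delta$-SVP-reduced, which---outside the degenerate case---forces $\|\vec{b}_i^*\|\le\delta\lambda_1(\lat)$ for each such $i$. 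Multiplying these $k-q$ bounds with the twin-reduction bound on $\vol(\basis_{[1,q]})$ yields $\vol(\basis_{[1,k]})\le(\delta^2\gamma_q)^{\frac{q+1}{q-1}\cdot\frac{q}{2}}(\delta\lambda_1(\lat))^{k}$, which is Theorem~\ref{th:property of bi-slide-reduction} and gives exactly the exponent $\frac{q+1}{q-1}\cdot\frac{n-k}{2k}$ after the final Minkowski step and oracle call. Your proposal is missing both of these ingredients---the rank-$q$ twin reduction at the front and the family of $k-q$ reduced suffixes---so as written the ``key lemma'' you defer to cannot be proved from the conditions you impose.
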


Together, these algorithms yield the asymptotically fastest proven 
running times 
for $\delta$-SVP for all approximation factors $n^{1/2 + \eps} \leq \delta \leq n^{O(1)}$---with a particularly large improvement when $\delta = n^c$ for $1/2 < c < 1$ or for any $c$ slightly smaller than an integer. Table \ref{Table:approximat SVP} and Figure~\ref{fig:running_time} summarize the current state of the art.

\begin{table}[t] 
\begin{center}
	\scalebox{0.88}{
	\begin{tabular}{|l|c l| c l | c l|}
		\hline
		Approximation factor & \multicolumn{2}{|l|}{Previous best}& \multicolumn{2}{|l|}{Folklore} & \multicolumn{2}{|l|}{This work} \\ \hline
		Exact  &$2^n$ &\cite{ADRSSolvingShortest15} &---& &  --- &\\ 
		$\Omega(1) \leq \delta \leq \sqrt{n}$ &$2^{0.802n}$ &\cite{WLWFindingShortest15} &---& & --- &\\ 
		$n^c$ for 
		$c \in [\frac{1}{2},1]$& $2^{0.802n}$  &\cite{WLWFindingShortest15} &--- & & $2^{\frac{0.802n}{2c}}$ &[*]$+$\cite{WLWFindingShortest15}\\ 
		$n^c$ for  $c\ge 1$ 
		&$2^{\frac{n}{\lfloor c+1\rfloor}}$  &\cite{GNFindingShort08}+\cite{ADRSSolvingShortest15}& $2^{ \frac{0.802 n}{\lfloor c+1\rfloor}}$ &\cite{GNFindingShort08}+\cite{WLWFindingShortest15} &
		$2^{\frac{0.802n}{c+1}}$  &[*]$+$\cite{WLWFindingShortest15}\\ \hline
	\end{tabular}}
		\caption{\label{Table:approximat SVP}Algorithms for solving SVP. We write [A]$+$[B] to denote the algorithm that uses basis reduction from [A] with the exact/near-exact SVP algorithm from [B], and we write [*] for this work. The ``folklore'' column represents a result that was likely known to many experts in the field but apparently never published.
	}
\end{center}
\end{table}

\begin{figure}[t]
	\begin{center}
	    \includegraphics[width=0.7\textwidth]{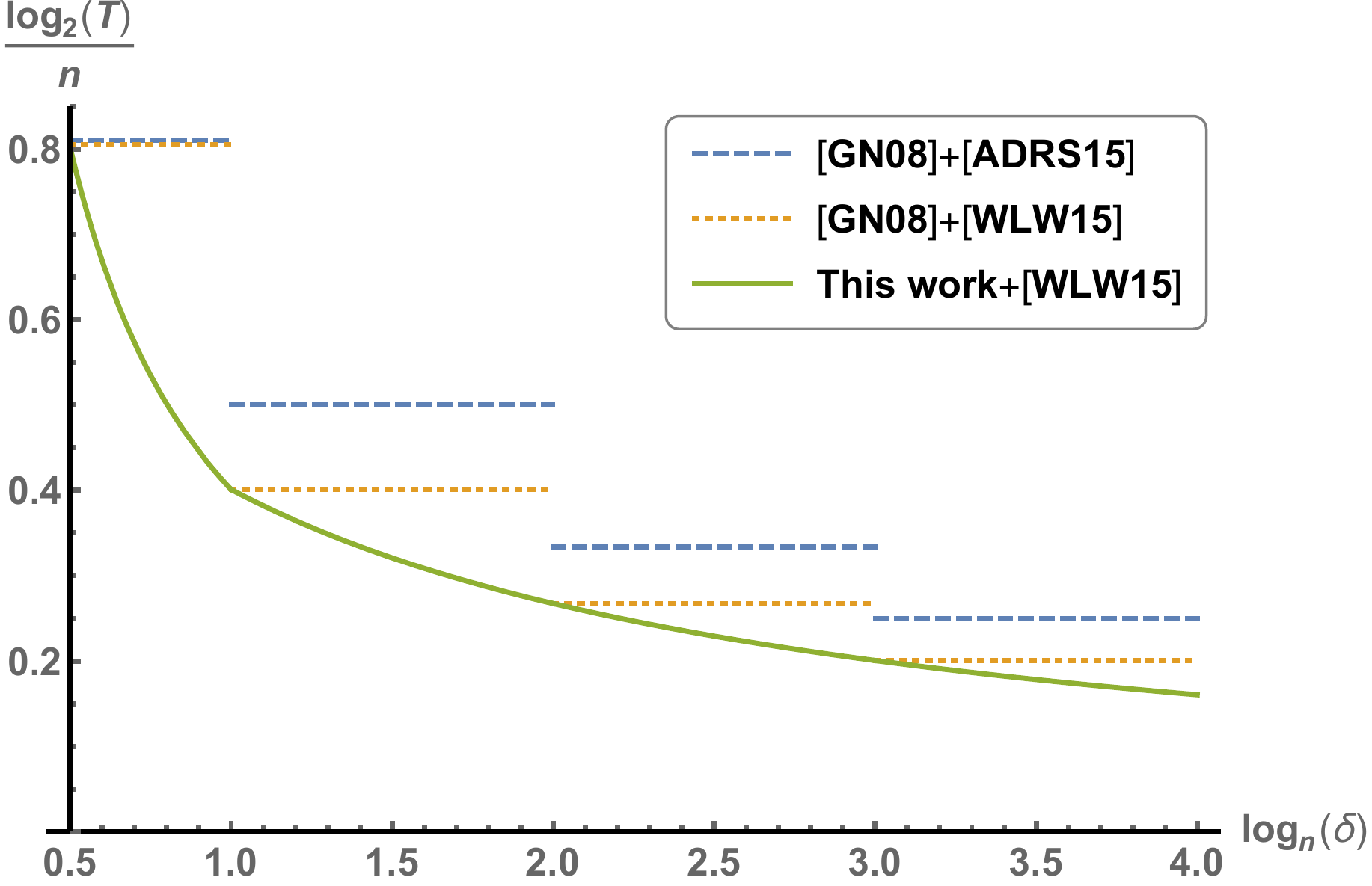}
	    \caption{\label{fig:running_time}Running time $T$ as a function of approximation factor $\delta$ for $\delta$-SVP. The $y$-axis is $\log_2(T)/n$, and the $x$-axis is $\log_n \delta$.}
	\end{center}
\end{figure}

\subsection{Our techniques}
\label{sec:techniques}

We first briefly recall some of the details of Gama and Nguyen's slide reduction. Slide reduction divides the basis $\basis = (\vec{b}_1,\ldots, \vec{b}_n) \in \R^{m \times n}$ evenly into disjoint ``primal blocks'' $\basis_{[ik+1,(i+1)k]}$ of length $k$. (Notice that this already requires $n$ to be divisible by $k$.) It also defines certain ``dual blocks'' $\basis_{[ik+2,(i+1)k+1]}$, which are the primal blocks shifted one to the right. The algorithm then tries to simultaneously satisfy certain primal and dual conditions on these blocks. Namely, it tries to \emph{SVP-reduce} each primal block---i.e., it tries to make the first vector in the block $\vec{b}_{ik+1}^*$ a shortest vector in $\lat(\basis_{[ik+1,(i+1)k]})$, where $\vec{b}_{j}^* := \pi_j(\vec{b}_j)$. Simultaneously, it tries to \emph{dual SVP-reduce} (DSVP-reduce) the dual blocks. (See Section~\ref{subsec:lattice reduction} for the definition of DSVP reduction.)
We call a basis that satisfies all of these conditions simultaneously \emph{slide-reduced}.

\begin{figure}[t]
\begin{center}
\begin{boxedminipage}{0.95 \textwidth}
\begin{center}
\includegraphics[scale=0.75]{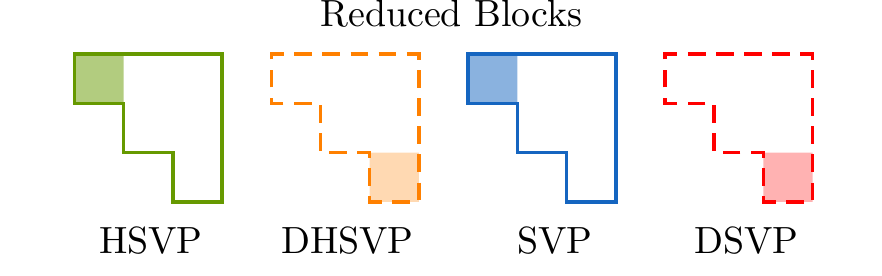}\\
\includegraphics[width = 0.45\textwidth]{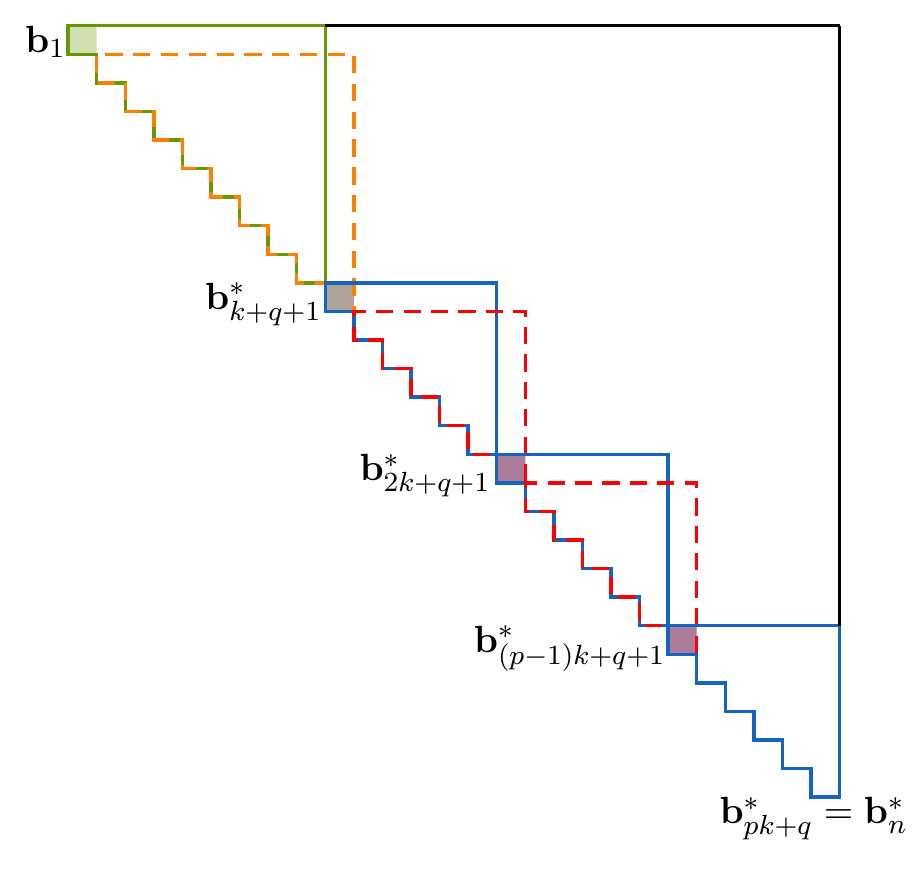} \quad \includegraphics[width = 0.45\textwidth]{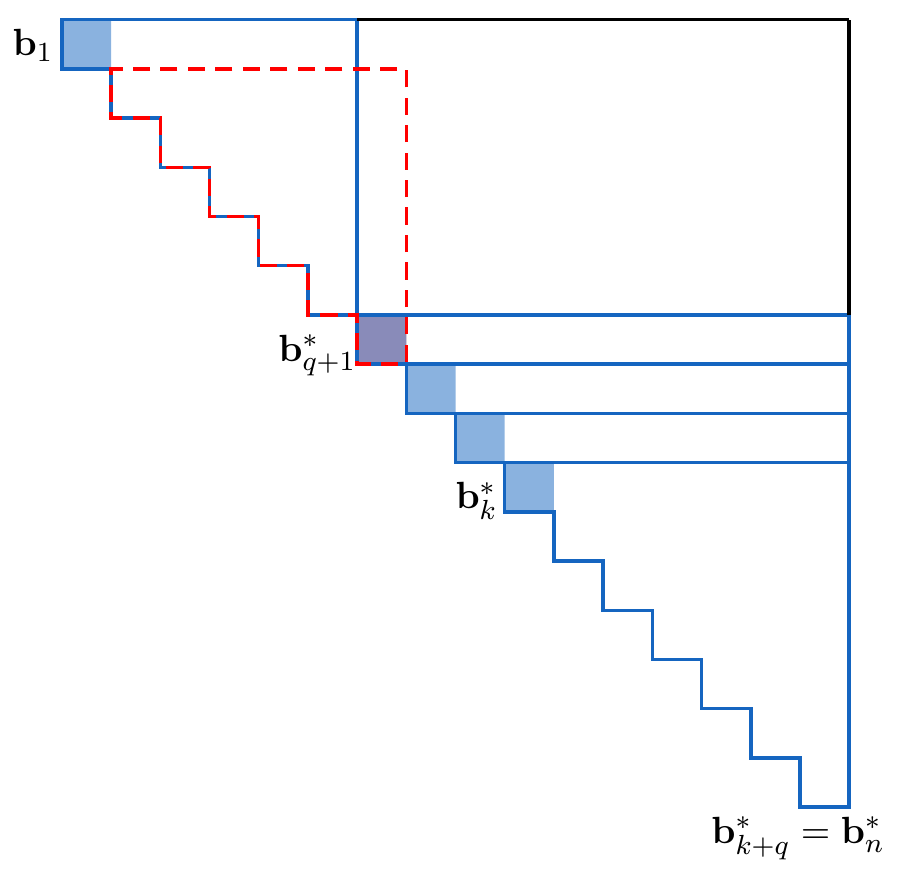}
\end{center}
\end{boxedminipage}
\caption{ \label{fig:SlideReduction}Slide reduction of an upper-triangular matrix for $n = pk + q \geq 2k$ (left) and $n = k + q \leq 2k$ (right). (The original notion of slide reduction in~\cite{GNFindingShort08} used only SVP-reduced and DSVP-reduced blocks of fixed size $k$.)}
\end{center}
\end{figure}

An SVP oracle for lattices with rank $k$ is sufficient to enforce all primal conditions or all dual conditions separately. (E.g., we can enforce the primal conditions by simply finding a shortest non-zero vector in each primal block and including this vector in an updated basis for the block.)
Furthermore, if all primal and dual conditions hold simultaneously, then $\|\vec{b}_1\| \leq \delta_{\mathsf{GN},S} \lambda_1(\lat)$ with $\delta_{\mathsf{GN},S}$ as in Eq.~\eqref{eq:slide_approx_ceil}, so that $\|\vec{b}_1\|$ yields a solution to $\delta_{\mathsf{GN},S}$-SVP. This follows from repeated application of a ``gluing'' lemma on such bases, which shows how to ``glue together'' two reduced block to obtain a larger reduced block. (See Lemma~\ref{lem:pilingup}.) 
Finally, Gama and Nguyen showed that, if we alternate between SVP-reducing the primal blocks and DSVP-reducing the dual blocks, then the basis will converge quite rapidly to a slide-reduced basis (up to some small slack)~\cite{GNFindingShort08}. Combining all of these facts together yields the main result in~\cite{GNFindingShort08}. (See Section~\ref{sec:Generalized-slide-reduction}.)

\paragraph{The case $n > 2k$.} We wish to extend slide reduction to the case when $n = p k + q$ for $1 \leq q < k$. So, intuitively, we have to decide what to do with ``the extra $q$ vectors in the basis.''

We start by observing that the analysis of slide reduction (and, in particular, this ``gluing'' property) does not quite require the first block $\basis_{[1,k]}$ to be SVP-reduced. Instead, it essentially only requires it to be ``HSVP-reduced.'' I.e., we do not really need $\|\vec{b}_1\| \leq \delta_S \lambda_1(\lat(\basis_{[1,k]}))$; we basically only need $\|\vec{b}_1\| \leq \delta_H \vol(\basis_{[1,k]})^{1/k}$. Something similar holds for the first dual block, so that at least for the first block and the corresponding dual block, we basically only need an HSVP oracle.\footnote{\label{foot:degenerate}We are ignoring a certain degenerate case here for simplicity. Namely, if all short vectors happen to lie in the span of the first block, and these vectors happen to be very short relative to the volume of the first block, then calling an HSVP oracle on the first block might not be sufficient to solve approximate SVP. Of course, if we know a low-dimensional subspace that contains the shortest non-zero vector, then finding short lattice vectors is much easier. This degenerate case is therefore easily handled separately (but it does in fact need to be handled separately).}

This suggests that we might want to simply add the extra $q$ vectors to the first block. I.e., we can take one ``big block'' $\basis_{[1,k+q]}$ of length $k + q$, and $p-1$ ``regular'' blocks $\basis_{[ik+q+1,(i+1)k + q]}$ of length $k$. The regular blocks satisfy the same conditions as in~\cite{GNFindingShort08}---they are SVP-reduced and the corresponding dual blocks are DSVP-reduced. For the big first block, we replace SVP reduction
by an appropriate notion of HSVP reduction. Similarly, we replace DSVP reduction of the (big) first dual block by the appropriate dual notion of HSVP reduction.

To get the best results, we instantiate our HSVP oracle with the algorithm from~\cite{MWPracticalPredictable16}. Since we only need an oracle for HSVP, we are able to take advantage of the very impressive approximation factor achieved by~\cite{MWPracticalPredictable16} for this problem (i.e., Eq.~\eqref{eq:MW_intro}). In fact, the approximation factor achieved by~\cite{MWPracticalPredictable16} is exactly what we need to apply our gluing lemma. (This is not a coincidence, as we explain in Section~\ref{sec:Generalized-slide-reduction}.) The result is Theorem~\ref{thm:main_result_intro}.

\paragraph{The case $n < 2k$. } For $n = k + q < 2k$, the above idea cannot work. In particular, a ``big block'' of size $k + q$ in this case would be our entire basis! So, instead of working with one big block and some ``regular blocks'' of size $k$, we work with a ``small block'' of size $q$ and one regular block of size $k$. We then simply perform slide reduction with (primal) blocks $\basis_{[1,q]}$ and $\basis_{[q+1,n]} = \basis_{[n-k+1,n]}$. If we were to stop here, we would achieve an approximation factor of roughly $\gamma_q$, which for $q = \Theta(k)$ is essentially the same as the approximation factor of roughly $\gamma_k$ that we get when the rank is $2k$. I.e., we would essentially ``pay for two blocks of length $k$,'' even though one block has size $q < k$.

However, we notice that a slide-reduced basis guarantees more than just a short first vector. It also promises a very strong bound on $\vol(\basis_{[1,q]})$. In particular, since $q < k$ and since we have access to an oracle for lattices with rank $k$, it is natural to try to extend this small block $\basis_{[1,q]}$ with low volume to a larger block $\basis_{[1,k]}$ of length $k$ that still has low volume. Indeed, we can use our SVP oracle to guarantee that $\basis_{[q+1,k]}$ consists of relatively short vectors so that $\vol(\basis_{[q+1,k]})$ is relatively small as well. (Formally, we SVP-reduce $\basis_{[i,n]}$ for $i \in [q+1,k]$. Again, we are ignoring a certain degenerate case, as in Footnote~\ref{foot:degenerate}.) This allows us to upper bound $\vol(\basis_{[1,k]}) = \vol(\basis_{[1,q]}) \cdot \vol(\basis_{[q+1,k]})$, which implies that  $\lambda_1(\lat(\basis_{[1,k]}))$ is relatively short. We can therefore find a short vector by making an additional SVP oracle call on $\lat(\basis_{[1,k]})$. (Micciancio and Walter used a similar idea in~\cite{MWPracticalPredictable16}.)

\subsection{Open questions and directions for future work}

Table~\ref{Table:approximat SVP} suggests an obvious open question: can we find a non-trivial basis reduction algorithm that provably solves $\delta$-SVP for $\delta \leq O(\sqrt{n})$?
 More formally, can we reduce $O(\sqrt{n})$-SVP on lattices with rank $n$ to exact SVP on lattices with rank $k = c n$ for some constant $c < 1$. Our current proof techniques seem to run into a fundamental barrier here in that they seem more-or-less incapable of achieving $\delta \ll \sqrt{\gamma_{k}}$. This setting is interesting in practice, as many record lattice computations use block reduction with $k \ge n/2$ as a subroutine, such as~\cite{CNFasterAlgorithms12}. (One can provably achieve approximation factors $\delta \ll \sqrt{\gamma_k}$ when $k = (1-o(1))n$ with a bit of work,\footnote{For example, it is immediate from the proof of Theorem~\ref{th:property of bi-slide-reduction} that the (very simple) notion of a slide-reduced basis for $n \leq 2k$ in Definition~\ref{def:slide-reduction} is already enough to obtain $\delta \approx \gamma_{n-k} \approx n- k$. So, for $n \lesssim k + \sqrt{k}$, this already achieves $\delta \lesssim \sqrt{n}$. With a bit more work, one can show that an extra oracle call like the one used in Corollary~\ref{cor:main1} can yield a still better approximation factor in this rather extreme setting of $k = (1-o(1))n$.} but it is not clear if these extreme parameters are useful.)

Next, we recall that this work shows how to exploit the existing very impressive algorithms for HSVP (in particular, DBKZ~\cite{MWPracticalPredictable16}) to obtain better algorithms for SVP. This suggests two closely related questions for future work: (1) can we find better algorithms for HSVP (e.g., for $\delta$-HSVP with $\delta \approx \sqrt{\gamma_n}$---i.e., ``near-exact'' HSVP); and (2) where else can we profitably replace SVP oracles with HSVP oracles? Indeed, most of our analysis (and the analysis of other basis reduction algorithms) treats the $\delta$-SVP oracle as a $\delta\sqrt{\gamma_k}$-HSVP oracle. We identified one way to exploit this to actually get a faster algorithm, but perhaps more can be done here---particularly if we find faster algorithms for HSVP.

We also leave it to future work to implement our algorithms and to study how they perform in practice. Indeed, Micciancio and Walter showed that (a slightly optimized version of) slide reduction is competitive with even the best heuristic algorithms in practice, in terms of both the running time  and the approximation factor~\cite{MWPracticalPredictable16}. Since our algorithms are generalizations of slide reduction, one might guess that they also perform well in practice. We leave it to others to confirm or refute this guess.

Finally, we note that we present two distinct (though similar) algorithms: one for lattices with rank $n \leq 2k$ and one for lattices with rank $n \geq 2k$. It is natural to ask whether there is a single algorithm that works in both regimes. Perhaps work on this question could even lead to better approximation factors.

\section{Preliminaries}

\label{sec:back}
We denote column vectors $\vec{x} \in \R^m$ by bold lower-case letters. Matrices $\basis \in \R^{m \times n}$ are denoted by bold upper-case letters, and we often think of a matrix as a list of column vectors, $\basis = (\vec{b}_1,\ldots, \vec{b}_n)$.
For a matrix $\vec{B} =(\mathbf{b}_{1}, \ldots, \mathbf{b}_{n})$ with $n$ linearly independent columns,
we write $\lat(\vec{B}) :=\{z_1 \vec{b}_1 + \cdots + z_n \vec{b}_n \ : \ z_i \in \Z \}$ for the lattice
generated by $\basis$ and $\|\vec{B}\| = \max \{\|\mathbf{b}_{1}\|,\ldots, \|\mathbf{b}_{n}\|\}$ for 
the maximum norm of a column.
We often implicitly assume that $m \geq n$ and that a basis matrix $\basis \in \R^{m \times n}$ has rank $n$ (i.e., that the columns of $\basis$ are linearly independent).
We use the notation $\log := \log_2 $ to mean the logarithm with base two.

\subsection{Lattices}\label{subsec:GSO} 

For any lattice $\lat$, its {\em dual lattice}
is \[
\lat^\times = \{\mathbf{w} \in \spn(\lat):\ \langle \mathbf{w}, \mathbf{y}\rangle \in
\mathbb{Z}\ \textrm{for\ all}\ \vec{y} \in \lat \}
\; .
\]
If $\basis\in \R^{m \times n}$ is a basis of $\lat$, then $\lat^\times$ has basis $\basis^\times := \basis(\basis^T \basis)^{-1}$, 
called
the {\em dual basis} of $\basis$.
The {\em reversed dual basis} $\basis^{-s}$ of $\basis$ is simply $\basis^\times$ with its columns in reversed order~\cite{GHNSymplecticLattice06}.

\subsection{Gram-Schmidt-Orthogonalization}
For 
a basis $\basis = (\vec{b}_1,\ldots, \vec{b}_n) \in \R^{m \times n}$, we associate a sequence of projections $\pi_{i} := \pi_{\{\vec{b}_1,\ldots, \vec{b}_{i-1}\}^\perp}$. Here, $\pi_{W^\perp}$ means the orthogonal 
projection onto the subspace $W^\perp$ orthogonal to $W$.
As in~\cite{GNFindingShort08}, $\basis_{[i,j]}$ denotes the projected block $(\pi_{i}(\vec{b}_i),\pi_{i}(\vec{b}_{i+1}),\ldots, \pi_{i}(\vec{b}_j))$.

We also associate to 
$\basis$ its Gram-Schmidt orthogonalization (GSO) $\vec{B}^{\ast} := (\mathbf{b}_1^{\ast}, \ldots, \mathbf{b}_{n}^{\ast})$, where $\vec{b}_i^* := \pi_{i}(\vec{b}_i) =  \vec{b}_i - \sum_{j < i} \mu_{i,j} \vec{b}_j^*$, and $\mu_{i,j} = \langle \vec{b}_i, \vec{b}_j^* \rangle/\|\vec{b}_j^*\|^2$. 

We say that 
$\basis$ is \emph{size-reduced} if $|\mu_{i,j}| \le \frac{1}{2}$ for all $i \neq j$: then 
$\|\basis\| \le \sqrt{n}\|\basis^{\ast}\|$. Transforming a basis into this form without modifying $\lat(\basis)$ or $\basis^{\ast}$ 
is called {\em size reduction}, and this can be done easily and efficiently. 
\subsection{Lattice basis reduction}
\label{subsec:lattice reduction}

\paragraph{LLL reduction.} Let $\vec{B} = (\mathbf{b}_1,
\ldots, \mathbf{b}_n)$ be a size-reduced basis.
For $\eps \in [0, 1]$, we say that $\vec{B}$ is $\eps$-{\em LLL-reduced} \cite{LLLFactoringPolynomials82} if every rank-two 
projected  
block $\vec{B}_{[i,i+1]}$
satisfies Lov\'{a}sz's condition:
$\|\mathbf{b}_{i}^{\ast}\|^2 \leq (1 +
\varepsilon)\|\mu_{i,i-1}\mathbf{b}_{i-1}^{\ast} + \mathbf{b}_{i}^{\ast}\|^2$ for $1 < i \leq n$.
For $\eps \geq 1/\poly(n)$, one can efficiently compute an $\eps$-LLL-reduced basis for a given lattice.

\paragraph{SVP reduction and its extensions.}
Let $\vec{B} = (\mathbf{b}_1, \ldots, \mathbf{b}_{n})$ be a basis of a lattice $\lat$ 
and $\delta \ge 1$ be an approximation factor. 

We say that $\vec{B}$ is $\delta$-{\em SVP-reduced} if 
$\|\mathbf{b}_1\|  \leq \delta\cdot \lambda_{1}(\lat)$.
Similarly, we say that $\vec{B}$ is $\delta$-{\em HSVP-reduced}
if $\|\vec{b}_1\| \leq \delta \cdot \mathrm{vol}(\lat)^{1/n}$.

$\vec{B}$ is {\em $\delta$-DSVP-reduced} \cite{GNFindingShort08} (where D
stands for dual) if the reversed dual basis $\vec{B}^{-s}$ is $\delta$-SVP-reduced and $\basis$ is $\frac{1}{3}$-LLL-reduced. 
Similarly, we say that $\vec{B}$ is {\em $\delta$-DHSVP-reduced}
if $\vec{B}^{-s}$ is $\delta$-HSVP-reduced.

The existence of such $\delta$-DSVP-reduced bases is guaranteed by a classical property of LLL that $\|\vec{b}_{n}^{\ast}\|$ never decreases during the LLL-reduction process \cite{LLLFactoringPolynomials82}.

We can efficiently compute a 
$\delta$-(D)SVP-reduced
basis for a given rank $n$ lattice $\lat \subseteq \Z^m$ with access to an oracle for $\delta$-SVP on lattices with rank at most $n$. 
Furthermore, given a basis $\basis = (\vec{b}_1,\ldots, \vec{b}_n) \in \Z^{m\times n}$ of $\lat$ and an index $i \in [1,n-k+1]$, we can use a $\delta$-SVP oracle for lattices with rank at most $k$ 
to efficiently compute a size-reduced basis $\mathbf{C} = (\vec{b}_1,\ldots, \vec{b}_{i-1}, \vec{c}_i,\ldots, \vec{c}_{i+k-1}, \vec{b}_{i+k},\ldots, \vec{b}_n)$ of $\lat$ 
such that the block $\mathbf{C}_{[i,i+k-1]}$ is $\delta$-SVP reduced or $\delta$-DSVP reduced: 
\begin{itemize}
    \item If $\mathbf{C}_{[i,i+k-1]}$ is $\delta$-SVP-reduced, 
    the procedures in \cite{GNFindingShort08,MWPracticalPredictable16} equipped with $\delta$-SVP-oracle ensure that $\|\mathbf{C}^{\ast}\|\le \|\basis^{\ast}\|$;
    \item If $\mathbf{C}_{[i,i+k-1]}$ is $\delta$-DSVP-reduced, the inherent LLL reduction implies $\|\mathbf{C}^{\ast}\|\le 2^{k}\|\basis^{\ast}\|$. Indeed, the GSO 
    of $\mathbf{C}_{[i,i+k-1]}$ satisfies $\|(\mathbf{C}_{[i,i+k-1]})^{\ast}\|\le 2^{k/2}\lambda_{k}(\lat(\mathbf{C}_{[i,i+k-1]}))$ (by \cite[p.\,518, Line 27]{LLLFactoringPolynomials82}) and $\lambda_{k}(\lat(\mathbf{C}_{[i,i+k-1]}))\le \sqrt{k}\|\basis^{\ast}\|$. Here, $\lambda_k(\cdot)$ denotes the $k$-th minimum.
\end{itemize}

With size-reduction, we can iteratively perform $\poly(n, \log \|\basis\|)$ many such operations efficiently. In particular,
doing so will not increase $\|\basis^{\ast}\|$ by more than a factor of $2^{\poly(n,\log\|B\|)}$, and therefore the same is true of $\|\basis\|$. 
That is, all intermediate entries and the total cost during execution (excluding 
oracle queries) remain polynomially bounded in the initial input size; See, e.g., \cite{GNFindingShort08,LNApproximatingDensest14} for the evidence.
Therefore, to bound the running time of basis reduction, it suffices 
to bound the number of calls to these block reduction 
subprocedures.

\paragraph{Twin reduction and gluing.} We define the following notion, which was implicit in~\cite{GNFindingShort08} and will arise repeatedly in our proofs. 
$\vec{B} = (\vec{b}_1,\ldots, \vec{b}_{d+1})$  is $\delta$-{\em twin-reduced} if $\basis_{[1,d]}$ is $\delta$-{\em HSVP-reduced}
and $\basis_{[2,d+1]}$ is $\delta$-{\em DHSVP-reduced}.
The usefulness of twin reduction is illustrated by the following fact, which is the key idea behind Gama and Nguyen's slide reduction (and is remarkably simple in hindsight).

\begin{fact}
    \label{fact:twinsies}
    If $\basis := (\vec{b}_1,\ldots, \vec{b}_{d+1}) \in \R^{m \times (d+1)}$ is $\delta$-twin-reduced, then 
    \begin{equation}
        \label{eq:twin_decay}
        \|\vec{b}_1\| \le \delta^{2d/(d-1)} \|\vec{b}^*_{d+1}\|
        \; .
    \end{equation}
    Furthermore, 
    \begin{equation}
        \label{eq:twin_volume}
        \delta^{-d/(d-1)} \|\vec{b}_1\|   \le \vol(\basis)^{1/(d+1)} \le \delta^{d/(d-1)} \|\vec{b}^*_{d+1}\| 
        \; .
    \end{equation}
\end{fact}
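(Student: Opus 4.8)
The plan is to extract two inequalities from the twin-reduction hypothesis and chain them together. Write $\basis := (\vec{b}_1,\ldots,\vec{b}_{d+1})$, let $V := \vol(\basis_{[1,d]}) = \prod_{i=1}^{d}\|\vec{b}_i^*\|$ be the volume of the primal block, and note $\vol(\basis) = V \cdot \|\vec{b}_{d+1}^*\|$. First, since $\basis_{[1,d]}$ is $\delta$-HSVP-reduced, we have $\|\vec{b}_1\| \le \delta V^{1/d}$. Second, the dual block $\basis_{[2,d+1]}$ being $\delta$-DHSVP-reduced means its reversed dual basis is $\delta$-HSVP-reduced; unwinding the definition of the reversed dual (whose Gram--Schmidt norms are the reciprocals of those of $\basis_{[2,d+1]}$ in reverse order, with the same volume up to inversion), this says that the \emph{last} Gram--Schmidt vector of $\basis_{[2,d+1]}$ — which is exactly $\vec{b}_{d+1}^*$ — satisfies $\|\vec{b}_{d+1}^*\|^{-1} \le \delta \cdot \vol(\basis_{[2,d+1]})^{-1/d}$, i.e. $\vol(\basis_{[2,d+1]})^{1/d} \le \delta \|\vec{b}_{d+1}^*\|$. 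Since $\vol(\basis_{[2,d+1]}) = \prod_{i=2}^{d+1}\|\vec{b}_i^*\| = (V/\|\vec{b}_1^*\|)\cdot\|\vec{b}_{d+1}^*\|$ and $\|\vec{b}_1^*\| = \|\vec{b}_1\|$, this rearranges to $V \le \delta^{d}\, \|\vec{b}_1\|\,\|\vec{b}_{d+1}^*\|^{1-d}$... let me instead keep it as $\vol(\basis_{[2,d+1]}) \le \delta^d \|\vec{b}_{d+1}^*\|^d$.

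Now combine. From the primal condition, $\|\vec{b}_1\|^d \le \delta^d V = \delta^d \|\vec{b}_1^*\| \cdot \prod_{i=2}^{d}\|\vec{b}_i^*\| = \delta^d \|\vec{b}_1\| \cdot \vol(\basis_{[2,d+1]}) / \|\vec{b}_{d+1}^*\|$, so that $\|\vec{b}_1\|^{d-1} \le \delta^d \vol(\basis_{[2,d+1]})/\|\vec{b}_{d+1}^*\| \le \delta^{2d}\|\vec{b}_{d+1}^*\|^{d-1}$, which gives $\|\vec{b}_1\| \le \delta^{2d/(d-1)}\|\vec{b}_{d+1}^*\|$, establishing Eq.~\eqref{eq:twin_decay}. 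For Eq.~\eqref{eq:twin_volume}, observe $\vol(\basis)^{1/(d+1)}$ lies between the smallest and largest of $\|\vec{b}_1\|,\ldots,\|\vec{b}_{d+1}^*\|$ once we know the ratio $\|\vec{b}_1\|/\|\vec{b}_{d+1}^*\|$ is at most $\delta^{2d/(d-1)}$: writing $\vol(\basis) = \|\vec{b}_1\|^{d+1}\cdot\prod_{i=1}^{d+1}(\|\vec{b}_i^*\|/\|\vec{b}_1\|)$ and bounding each ratio in $[\delta^{-2d/(d-1)},1]$ — which requires the intermediate monotonicity $\|\vec{b}_1\| \ge \|\vec{b}_i^*\| \ge \|\vec{b}_{d+1}^*\|$, obtainable by applying the same argument to sub-blocks or, more cheaply, just using the two extreme bounds — yields $\delta^{-d/(d-1)}\|\vec{b}_1\| \le \vol(\basis)^{1/(d+1)}$; the symmetric computation starting from $\vol(\basis) \ge \|\vec{b}_{d+1}^*\|^{d+1}\prod(\|\vec{b}_i^*\|/\|\vec{b}_{d+1}^*\|)$ with ratios in $[1,\delta^{2d/(d-1)}]$ gives the upper bound $\vol(\basis)^{1/(d+1)} \le \delta^{d/(d-1)}\|\vec{b}_{d+1}^*\|$.

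The main obstacle I anticipate is purely bookkeeping: getting the reversed-dual translation exactly right, i.e. confirming that $\delta$-DHSVP-reducedness of $\basis_{[2,d+1]}$ genuinely controls $\|\vec{b}_{d+1}^*\|$ from below in terms of $\vol(\basis_{[2,d+1]})$, since the reversed dual reverses the order of the Gram--Schmidt profile and inverts each norm — one must check that ``first vector of the reversed dual is HSVP-short'' corresponds to ``last primal Gram--Schmidt norm is not too small.'' Once that dictionary entry is pinned down, everything else is a one-line manipulation of the telescoping product $\vol(\basis) = \prod\|\vec{b}_i^*\|$, and the two displayed inequalities fall out by taking $d$-th and $(d+1)$-th roots respectively. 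I would present the primal inequality and the dual inequality as two short claims, then a two-line combination for Eq.~\eqref{eq:twin_decay}, then deduce Eq.~\eqref{eq:twin_volume} from the fact that a geometric quantity $\vol(\basis)^{1/(d+1)}$ trapped between $\|\vec{b}_{d+1}^*\|$ and $\|\vec{b}_1\| \le \delta^{2d/(d-1)}\|\vec{b}_{d+1}^*\|$ is within $\delta^{d/(d-1)}$ of either endpoint.
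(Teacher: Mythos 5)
Your derivation of Eq.~\eqref{eq:twin_decay} is correct and essentially identical to the paper's: both condition on the first block give $\|\vec{b}_1\|^{d-1} \le \delta^d \vol(\basis_{[2,d]})$ after cancelling $\|\vec{b}_1^*\| = \|\vec{b}_1\|$, and the DHSVP condition (whose ``dictionary entry'' you state correctly) gives $\vol(\basis_{[2,d]}) \le \delta^d \|\vec{b}_{d+1}^*\|^{d-1}$; chaining these is exactly the paper's argument.

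Your argument for Eq.~\eqref{eq:twin_volume}, however, has a genuine gap. First, twin-reduction gives no per-index control on the intermediate Gram--Schmidt norms: there is no reason that $\|\vec{b}_1\| \ge \|\vec{b}_i^*\| \ge \|\vec{b}_{d+1}^*\|$ for $1 < i < d+1$, and ``applying the same argument to sub-blocks'' does not rescue this, since sub-blocks of a twin-reduced basis are not themselves twin-reduced (nor individually HSVP-reduced). Second, and more decisively, even if you granted that every ratio $\|\vec{b}_i^*\|/\|\vec{b}_1\|$ lies in $[\delta^{-2d/(d-1)},1]$, the product over $d+1$ indices would only yield $\vol(\basis)^{1/(d+1)} \ge \delta^{-2d/(d-1)}\|\vec{b}_1\|$, which is off by a factor of $2$ in the exponent from the claimed $\delta^{-d/(d-1)}$; a quantity trapped between two endpoints within ratio $\delta^{2d/(d-1)}$ of each other need not be within $\delta^{d/(d-1)}$ of both unless it is their exact geometric mean, which $\vol(\basis)^{1/(d+1)}$ is not. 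The correct route (the paper's) avoids any per-index bound: the HSVP condition controls the \emph{product} directly, giving $\vol(\basis) = \vol(\basis_{[1,d]})\cdot\|\vec{b}_{d+1}^*\| \ge \delta^{-d}\|\vec{b}_1\|^d\|\vec{b}_{d+1}^*\|$, and one application of Eq.~\eqref{eq:twin_decay} then yields $\vol(\basis) \ge \delta^{-d-2d/(d-1)}\|\vec{b}_1\|^{d+1} = \delta^{-d(d+1)/(d-1)}\|\vec{b}_1\|^{d+1}$, i.e.\ the first inequality of Eq.~\eqref{eq:twin_volume} after taking $(d+1)$-th roots; the second inequality follows symmetrically from $\vol(\basis) = \|\vec{b}_1\|\cdot\vol(\basis_{[2,d+1]}) \le \delta^d\|\vec{b}_1\|\,\|\vec{b}_{d+1}^*\|^d$ and Eq.~\eqref{eq:twin_decay}.
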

\begin{proof}
    By definition, we have 
    $
        \|\vec{b}_1\|^d \leq \delta^d \vol(\basis_{[1,d]})
    $,
    which is equivalent to
    \[
    \|\vec{b}_1\|^{d-1} \leq \delta^d \vol(\basis_{[2,d]})
    \; .
    \]
    Similarly,
    \[
        \vol(\basis_{[2,d]}) \leq \delta^{d} \|\vec{b}_{d+1}^*\|^{d-1}
        \; .
    \]
    Combining these two inequalities yields Eq.~\eqref{eq:twin_decay}.
    
    Finally, we have
        $\|\vec{b}_1\|^d \|\vec{b}_{d+1}^*\| \leq \delta^{d} \vol(\basis)$.
    Applying Eq.~\eqref{eq:twin_decay} implies the first inequality in Eq.~\eqref{eq:twin_volume}, and similar analysis yields the second inequality.
\end{proof}

The following gluing lemma,
which is more-or-less implicit in prior work,
shows conditions on the blocks $\basis_{[1,d]}$ and $\basis_{[d+1,n]}$ that are sufficient to imply (H)SVP reduction of the full basis $\basis$. Notice in particular that the decay of the Gram-Schmidt vectors guaranteed by Eq.~\eqref{eq:twin_decay} is what is needed for Item~\ref{item:piling_HSVP} of the lemma below, when $\eta = \delta^{1/(d-1)}$. And, with this same choice of $\eta$, the HSVP reduction requirement on $\basis_{[1,d]}$ in Fact~\ref{fact:twinsies} is the same as the one in Item~\ref{item:piling_HSVP} of Lemma~\ref{lem:pilingup}.

\begin{lemma}[The gluing lemma] \label{lem:pilingup}
Let $\basis := (\vec{b}_1,\ldots, \vec{b}_n) \in \R^{m \times n}$, $\alpha, \beta, \eta \geq 1$, and $1 \le d \le n$.
\begin{enumerate}
    \item If $\basis_{[d+1,n]}$ is $\beta$-SVP-reduced,
$\|\vec{b}_1\| \le \alpha \|\vec{b}_{d+1}^*\|$, and
$\lambda_1(\lat(\basis)) < \lambda_1(\lat(\basis_{[1,d]}))$,
then $\basis$ is $\alpha \beta$-SVP-reduced. \label{item:piling_SVP}
    \item If $\basis_{[1,d]}$ is $\eta^{d-1}$-HSVP-reduced,
    $\basis_{[d+1,n]}$ is $\eta^{n-d-1}$-HSVP-reduced, and $\|\vec{b}_1\| \le \eta^{2d} \|\vec{b}_{d+1}^*\|$,
then $\basis$ is $\eta^{n-1}$-HSVP-reduced. \label{item:piling_HSVP}
\end{enumerate}
\end{lemma}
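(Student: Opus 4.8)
The plan is to handle the two items independently; each follows from a short direct argument, with no induction or auxiliary construction needed.

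\textbf{Item 1.} First I would take a shortest nonzero vector $\vec{v} \in \lat(\basis)$, so that $\|\vec{v}\| = \lambda_1(\lat(\basis))$. The key observation is that the hypothesis $\lambda_1(\lat(\basis)) < \lambda_1(\lat(\basis_{[1,d]}))$ forces $\vec{v} \notin \spn(\vec{b}_1,\ldots,\vec{b}_d)$: otherwise $\pi_{d+1}(\vec{v}) = \vec{0}$, which by linear independence of the projected block $\basis_{[d+1,n]}$ forces $\vec{v} \in \lat(\basis_{[1,d]})$, contradicting the minimality of $\lambda_1$ of that sublattice. Hence $\pi_{d+1}(\vec{v})$ is a nonzero vector of $\lat(\basis_{[d+1,n]})$, and since orthogonal projection never increases length, $\lambda_1(\lat(\basis_{[d+1,n]})) \le \|\pi_{d+1}(\vec{v})\| \le \|\vec{v}\| = \lambda_1(\lat(\basis))$. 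Now I would use that $\basis_{[d+1,n]}$ is $\beta$-SVP-reduced---its first Gram--Schmidt vector being $\vec{b}_{d+1}^*$---to get $\|\vec{b}_{d+1}^*\| \le \beta\,\lambda_1(\lat(\basis_{[d+1,n]})) \le \beta\,\lambda_1(\lat(\basis))$, and combine with $\|\vec{b}_1\| \le \alpha\|\vec{b}_{d+1}^*\|$ to conclude $\|\vec{b}_1\| \le \alpha\beta\,\lambda_1(\lat(\basis))$.

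\textbf{Item 2.} Here I would use the multiplicative splitting $\vol(\lat(\basis)) = \vol(\basis_{[1,d]}) \cdot \vol(\basis_{[d+1,n]})$, which holds because $\basis_{[d+1,n]}$ lies in the orthogonal complement of $\spn(\basis_{[1,d]})$. Rewriting the two HSVP hypotheses as $\vol(\basis_{[1,d]}) \ge \eta^{-d(d-1)}\|\vec{b}_1\|^{d}$ and $\vol(\basis_{[d+1,n]}) \ge \eta^{-(n-d)(n-d-1)}\|\vec{b}_{d+1}^*\|^{n-d}$, and substituting $\|\vec{b}_{d+1}^*\| \ge \eta^{-2d}\|\vec{b}_1\|$ into the second, I would multiply the two bounds to obtain $\vol(\lat(\basis)) \ge \eta^{-E}\|\vec{b}_1\|^{n}$ with $E = d(d-1) + (n-d)(n-d-1) + 2d(n-d)$. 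The whole point is the identity $E = n(n-1)$---it is exactly $(a+b)^2 - (a+b)$ evaluated at $a = d$, $b = n-d$---after which taking $n$-th roots gives $\|\vec{b}_1\| \le \eta^{n-1}\vol(\lat(\basis))^{1/n}$, i.e., $\basis$ is $\eta^{n-1}$-HSVP-reduced.

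Neither step is deep. The part that needs care is the non-degeneracy argument in Item~1: correctly ruling out that the shortest vector of $\lat(\basis)$ lies in $\spn(\basis_{[1,d]})$ is exactly where the somewhat unusual hypothesis $\lambda_1(\lat(\basis)) < \lambda_1(\lat(\basis_{[1,d]}))$ is used, and it is the technical shadow of the ``degenerate case'' flagged in Footnote~\ref{foot:degenerate}. For Item~2 the only real risk is an arithmetic slip in collapsing $E$, so I would sanity-check it against Fact~\ref{fact:twinsies} in the twin-reduced setting (with $\eta = \delta^{1/(d-1)}$) and against the edge cases $d = 1$ and $d = n-1$.
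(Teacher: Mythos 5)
Your proposal is correct and follows essentially the same route as the paper's proof: for Item~1, the strict inequality $\lambda_1(\lat(\basis)) < \lambda_1(\lat(\basis_{[1,d]}))$ yields a shortest vector with nonzero projection onto the last block, giving $\|\vec{b}_{d+1}^*\| \le \beta\,\lambda_1(\lat)$; for Item~2, multiplying the two HSVP bounds and the decay bound and checking $d(d-1)+(n-d)(n-d-1)+2d(n-d)=n(n-1)$. No substantive differences.
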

\begin{proof}
For Item~\ref{item:piling_SVP},
since $\lambda_1(\lat(\basis)) < \lambda_1(\lat(\basis_{[1,d]}))$,
there exists a shortest non-zero vector $\vec{u} \in \lat(\basis)$
with $\|\vec{u}\| = \lambda_1( \lat(\basis))$
and $\pi_d(\vec{u}) \ne 0$.
Since $\basis_{[d+1,n]}$ is $\beta$-SVP-reduced,
it follows that $\|\vec{b}_{d+1}^* \|/\beta \le  \|\pi_d(\vec{u})\|
\le \|\vec{u}\| = \lambda_1(\lat(\basis))$.
Finally, we have 
$
\|\vec{b}_1\| \leq \alpha \|\vec{b}_{d+1}^*\| \leq \alpha \beta \lambda_1(\lat)$
as needed.

Turning to Item~\ref{item:piling_HSVP}, we note that the HSVP conditions imply that $\|\vec{b}_1\|^d \leq \eta^{d(d-1)} \vol(\basis_{[1,d]})$ and $\|\vec{b}_{d+1}^*\|^{n-d} \leq \eta^{(n-d)(n-d-1)}\vol(\basis_{[d+1,n]})$. Using the bound on $\|\vec{b}_1\|$ relative to $\|\vec{b}_{d+1}^*\|$, we have
\[
\|\vec{b}_1\|^n \leq \eta^{2d(n-d)}\|\vec{b}_1 \|^d \cdot \|\vec{b}_{d+1}^*\|^{n-d} \leq \eta^{2(n-d)d + d(d-1) + (n-d)(n-d-1)} \vol(\basis) = \eta^{n(n-1)} \vol(\basis)
\; ,
\]
as needed.
\end{proof}

\subsection{The Micciancio-Walter DBKZ algorithm}
\label{sec:MW}

We  recall Micciancio and Walter's elegant 
DBKZ algorithm \cite{MWPracticalPredictable16}, as we will need it later. Formally, we slightly generalize DBKZ by allowing for the use of a $\delta$-SVP-oracle. We provide only a high-level sketch of the proof of correctness, as the full proof is the same as the proof in~\cite{MWPracticalPredictable16}, with Hermite's constant $\gamma_k$ replaced by $\delta^2 \gamma_k$.

\begin{algorithm}
	\small
	\caption{
		The Micciancio-Walter  DBKZ 
		algorithm \cite[Algorithm 1]{MWPracticalPredictable16}\label{alg:SDBKZ}}
	\begin{algorithmic}[1]
		\REQUIRE A block size $k \ge 2$, number of tours $N$,  
		a basis $\basis = (\mathbf{b}_{1}, \cdots,
		\mathbf{b}_{n}) \in \Z^{m \times n}$, and access to a $\delta$-SVP oracle for lattices with rank $k$.
		
		\ENSURE  A new basis of $\lat(\basis)$.
		
		\FOR{$\ell = 1$ \TO $N$}  
		\label{step of SDBKZ alg:repeat}
		
		\FOR{$i = 1$ \TO $n-k$}\label{step of SDBKZ alg:for}
		
		\STATE 
		$\delta$-SVP-reduce $\basis_{[i,i+k-1]}$.
		
		\label{step of LSDBKZ alg:size-reduction}
		\ENDFOR 
		\label{step of SDBKZ alg:firt end for}
		\FOR{$j = n-k+1$ \TO $1$} \label{step of SDBKZ alg:for2}
		
		\STATE 
		$\delta$-DSVP-reduce $\basis_{[j,j+k-1]}$
		
		\ENDFOR 
		\label{step of SDBKZ alg:end for}
		
		\ENDFOR
		\label{step of SDBKZ alg:until}
		\STATE  $\delta$-SVP-reduce $\basis_{[1,k]}$.
		\RETURN $\basis$. \label{step of SDBKZ alg:return}

	\end{algorithmic}
\end{algorithm}

\begin{theorem}\label{th:SDBKZ}
	For integers $n > k \geq 2$, an approximation factor $1 \leq \delta \leq 2^k$,
	an input basis $\basis_{0} \in \Z^{m \times n}$ for a lattice $\lat  \subseteq \Z^m$, and 
	$
	N := \lceil (2n^2/(k-1)^2) \cdot \log (n\log(5\|\basis_{0}\|)/\eps) \rceil$
	for some $\eps \in [2^{-\poly(n)},1]$, 
	Algorithm \ref{alg:SDBKZ} outputs a
	basis $\basis$ of $\lat$ in polynomial time (excluding oracle queries) such that
	\[
	\|\vec{b}_1\| \leq (1+\eps)\cdot (\delta^2 \gamma_{k})^{\frac{n-1}{2(k-1)}}\vol(\lat)^{1/n}
	\; 
	\]
	by making $N
	\cdot (2n-2k+1)+1$ calls to the $\delta$-SVP oracle for lattices with rank $k$.
\end{theorem}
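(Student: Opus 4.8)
The plan is to follow Micciancio and Walter's analysis of DBKZ \cite{MWPracticalPredictable16} essentially verbatim, verifying that every invocation of an exact SVP oracle there can be replaced by a $\delta$-SVP oracle at the sole cost of replacing Hermite's constant $\gamma_k$ by $\delta^2\gamma_k$.

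The one structural observation that enables this substitution is the following: whenever the algorithm $\delta$-SVP-reduces a rank-$k$ block $\mathbf{C}$, the new first vector satisfies $\|\mathbf{c}_1\| \le \delta\lambda_1(\lat(\mathbf{C})) \le \delta\sqrt{\gamma_k}\,\vol(\lat(\mathbf{C}))^{1/k} = \sqrt{\delta^2\gamma_k}\,\vol(\lat(\mathbf{C}))^{1/k}$, i.e.\ $\mathbf{C}$ becomes $\sqrt{\delta^2\gamma_k}$-HSVP-reduced (and $\|\mathbf{C}^*\| \le \|\basis^*\|$, as recorded in Section~\ref{subsec:lattice reduction}); dually, $\delta$-DSVP-reducing a block makes it $\sqrt{\delta^2\gamma_k}$-DHSVP-reduced. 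Since the Hermite inequality on rank-$k$ blocks is the only place $\gamma_k$ enters the MW potential argument, this is precisely the promised substitution.

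With this in hand I would import the MW convergence analysis. The idea is to track the Gram--Schmidt log-profile --- equivalently the partial log-volumes $\log\vol(\basis_{[1,i]})$ --- across tours: chaining the per-block HSVP inequalities above shows that one forward tour maps this profile by an affine map, a backward tour by the dual affine map, and the composition of a full tour contracts the relevant scalar with factor $1-\Theta((k-1)^2/n^2)$ toward the fixed point corresponding to the logarithm of the target Hermite factor. (The hypothesis $\delta\le 2^k$ keeps all the logarithms involved $O(k)$, which is what makes the stated number of tours enough; the minor subtlety that block determinants change during a tour is handled exactly as in \cite{MWPracticalPredictable16}.) Thus after $N := \lceil (2n^2/(k-1)^2)\log(n\log(5\|\basis_0\|)/\eps)\rceil$ tours one gets $\vol(\basis_{[1,k]})^{1/k} \le (1+\eps)(\delta^2\gamma_k)^{\frac{n-k}{2(k-1)}}\vol(\lat)^{1/n}$. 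The final line of the algorithm $\delta$-SVP-reduces $\basis_{[1,k]}$ --- which leaves $\vol(\basis_{[1,k]})$ unchanged --- yielding $\|\vec{b}_1\|\le\sqrt{\delta^2\gamma_k}\,\vol(\basis_{[1,k]})^{1/k}\le(1+\eps)(\delta^2\gamma_k)^{\frac{n-1}{2(k-1)}}\vol(\lat)^{1/n}$, using $\frac{n-k}{2(k-1)}+\frac12=\frac{n-1}{2(k-1)}$. (This extra call is also needed because the last operation of tour $N$ is a DSVP-reduction, which controls $\|\vec{b}_k^*\|$ but says nothing about $\|\vec{b}_1\|$.)

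It remains to count oracle calls and bound the running time. Each tour makes $n-k$ forward $\delta$-SVP-reductions and $n-k+1$ backward $\delta$-DSVP-reductions, i.e.\ $2n-2k+1$ calls, and the final line makes one more, for $N(2n-2k+1)+1$ in total. For the running time, $N=\poly(n,\log(1/\eps),\log\log\|\basis_0\|)$, which is $\poly$ in the input size since $\eps\ge 2^{-\poly(n)}$; each (D)SVP-reduction subprocedure runs in polynomial time apart from its oracle query; and by the discussion in Section~\ref{subsec:lattice reduction}, performing this many such subprocedures together with size reduction keeps $\|\basis^*\|$, and hence $\|\basis\|$ and all intermediate entries, of polynomial bit-length throughout, so the non-oracle cost is polynomial. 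The only genuinely substantive step is the convergence analysis of the third paragraph, which I would cite from \cite{MWPracticalPredictable16} rather than reprove, noting only the $\gamma_k\mapsto\delta^2\gamma_k$ change.
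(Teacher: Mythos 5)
Your proposal is correct and follows essentially the same route as the paper's own proof sketch: both reduce the statement to the Micciancio--Walter/Neumaier contraction argument for the partial log-volumes with $\gamma_k$ replaced by $\delta^2\gamma_k$, derive $\vol(\basis_{[1,k]}^{(N)})^{1/k}\le(1+\eps)(\delta^2\gamma_k)^{\frac{n-k}{2(k-1)}}\vol(\lat)^{1/n}$ after $N$ tours, and finish with the single extra $\delta$-SVP call on $\basis_{[1,k]}$. The oracle-call count and the polynomial running-time argument also match the paper's.
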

\begin{proof}[Proof sketch]
We briefly sketch a proof of the theorem, but we outsource the most technical step to a claim from~\cite{MWPracticalPredictable16}, which was originally proven in~\cite{NeuBoundingBasis17}. Let $\basis^{(\ell)}$ be the basis immediately after the $\ell$th tour, and let 
$x_i^{(\ell)} := \log \vol(\basis_{[1,k+i-1]}^{(\ell)}) - \frac{k+i-1}{n} \log \vol(\lat)$ for $i=1,\ldots,n-k$. Let
	\[
	y_i := \frac{(n-k-i+1)(k+i-1)}{k-1} \cdot \log(\delta \sqrt{\gamma_k})\ \ \textrm{for}\ \ i=1,\ldots,n-k
	\; .
	\]
	By~\cite[Claim 3]{MWPracticalPredictable16} (originally proven in~\cite{NeuBoundingBasis17}), we have
	\[
	\max_{1\le i\le n-k} \big|x_i^{(\ell)}/y_i - 1\big| \leq (1-\xi) \max_{1\le i\le n-k} \big|x_i^{(\ell-1)}/y_i - 1 \big|
	\; ,
	\]
	where $\xi := 1/(1+n^2/(4k(k-1))) \geq 4(k-1)^2/(5n^2)$. Furthermore, notice that
	\[
	\max_{1\le i\le n-k} \big|x_i^{(0)}/y_i - 1 \big| \leq \frac{k(n-k)\log (5\|\basis^{(0)}\|)}{y_1}
	\;.
	\]
	It follows that 
	\begin{align*}
	\frac{x_1^{(N)}-y_1}{y_1}
	    &\leq (1-\xi)^N \max_{1 \le i\le n-k} \big| x_i^{(0)}/y_i - 1 \big| \\
	    &\leq  e^{-4(k-1)^2 N/(5n^2)}\cdot \frac{k(n-k)\log (5\|\basis^{(0)}\|)}{y_1}  \\
	    &\leq \frac{k\log(1+\eps)}{y_1}
	\;.
	\end{align*}
	
	In other words,
	\[
	\vol\big(\basis_{[1,k]}^{(N)}\big) 
	\leq (1+\eps)^{k} \cdot  (\delta^2 \gamma_{k})^{\frac{(n-k)k}{2(k-1)}}\vol(\lat)^{k/n}
	\; .
	\]
	Notice that the first vector $\vec{b}_1$ of the output basis is a $\delta$-approximate shortest vector in $\lat\big(\basis_{[1,k]}^{(N)}\big)$. Therefore,
	\[
	\|\vec{b}_1\| \leq \delta \sqrt{\gamma_k} \cdot \vol\big(\basis_{[1,k]}^{(N)}\big)^{1/k} \leq (1+\eps) (\delta^2 \gamma_k)^{\frac{n-1}{2(k-1)}} \vol(\lat)^{1/n}
	\; ,
	\]
	as needed.
	\end{proof}

\section{Slide reduction for \texorpdfstring{$n\le 2k$}{n at most 2k}}
\label{sec:bislide}

 In this section, we consider a generalization of Gama and Nguyen's slide reduction that applies to the case when $k < n \leq 2k$~\cite{GNFindingShort08}. Our definition in this case is not particularly novel or surprising, as it is essentially identical to Gama and Nguyen's except that our blocks are not the same size.\footnote{The only difference, apart from the approximation factor $\delta$, is that we use SVP reduction instead of HKZ reduction for the primal. It is clear from the proof in~\cite{GNFindingShort08} that only SVP reduction is required, as was observed in~\cite{MWPracticalPredictable16}. We \emph{do} require that additional blocks $\basis_{[i,n]}$ for $q+1\le i \leq k$ are SVP-reduced, which is quite similar to simply HKZ-reducing $\basis_{[q+1,n]}$, but this requirement plays a distinct role in our analysis, as we discuss below.}
 
 What \emph{is} surprising about this definition is that it allows us to achieve sublinear approximation factors for SVP when the rank is $n = k+q$ for $q = \Theta(k)$. Before this work, it seemed that approximation factors less than roughly $\gamma_q \approx n$ could not be achieved using the techniques of slide reduction (or, for that matter, any other known techniques with formal proofs). Indeed, our slide-reduced basis only achieves $\|\vec{b}_1\| \lesssim \gamma_q\lambda_1(\lat)$, which is the approximation factor resulting from the gluing lemma, Lemma~\ref{lem:pilingup}. (This inequality is tight.) We overcome this barrier by using our additional constraints on the primal together with some additional properties of slide-reduced bases (namely, Eq.~\eqref{eq:twin_volume}) to bound $\lambda_1(\lat(\basis_{[1,k]}))$. Perhaps surprisingly, the resulting bound is much better than the bound on $\|\vec{b}_1\|$, which allows us to find a much shorter vector with an additional oracle call.

\begin{definition}[Slide reduction]\label{def:slide-reduction}
Let $n =k+q$ where $1 \le q \le k$ 
are integers.
A basis $\basis$ of a lattice with rank $n$ is {\em $(\delta,k)$-slide-reduced} (with block size $k \geq 2$ and approximation
factor $\delta \ge 1$) if it is size-reduced and
satisfies the following set of conditions.
\begin{enumerate}
\item Primal conditions: The blocks  $\basis_{[1,q]}$ and $\basis_{[i,n]}$ for 
$i\in [q+1,\max\{k,q+1\}]$
 are $\delta$-SVP-reduced. 
\item Dual condition: the block $\basis_{[2,q+1]}$ is $\delta$-DSVP-reduced.
\end{enumerate}
\end{definition}

A reader familiar with the slide reduction algorithm from~\cite{GNFindingShort08} will not be surprised to learn that such a basis can be found (up to some small slack) using polynomially many calls to a $\delta$-SVP oracle on lattices with rank at most 
$k$. Before presenting and analyzing the algorithm, we show that such a slide-reduced basis is in fact useful for approximating SVP with sub-linear factors. (We note in passing that a slight modification of the proof of Theorem~\ref{th:property of bi-slide-reduction} yields a better result when $q = o(k)$. This does not seem very useful on its own, though, since when $q = o(k)$, the running times of our best SVP algorithms are essentially the same for rank $k$ and rank $k+q$.)

\begin{theorem}\label{th:property of bi-slide-reduction}
Let $\lat$ be a lattice with rank $n =k+q$ where $2 \le q \le k$ 
are integers. For any $\delta \ge 1$, 
if a basis $\vec{B}$ of 
$\lat$  
is $(\delta,k)$-slide-reduced, then,
\begin{equation*}\label{eq:bi-slide-reduction for local SVP}  
\lambda_{1}(\lat(\basis_{[1,k]})) 
\leq \delta\sqrt{\gamma_k}  (\delta^2 \gamma_{q})^{\frac{q+1}{q-1} \cdot \frac{n-k}{2k}} \lambda_{1}(\lat) 
  \;.
\end{equation*}
\end{theorem}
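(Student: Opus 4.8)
The plan is to bound $\lambda_1(\lat(\basis_{[1,k]}))$ by controlling $\vol(\basis_{[1,k]})$ and then applying Minkowski's inequality for the rank-$k$ lattice $\lat(\basis_{[1,k]})$, i.e., $\lambda_1(\lat(\basis_{[1,k]})) \le \delta\sqrt{\gamma_k}\cdot\vol(\basis_{[1,k]})^{1/k}$ (which is exactly what the last oracle call buys us, via $\delta$-SVP-reduction of $\basis_{[1,k]}$; but here we are just bounding $\lambda_1$, so Minkowski suffices). Write $\vol(\basis_{[1,k]}) = \vol(\basis_{[1,q]}) \cdot \vol(\basis_{[q+1,k]})$. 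The first factor is controlled by the twin reduction on the first block: $\basis_{[1,q]}$ is $\delta$-SVP-reduced (hence $\delta\sqrt{\gamma_q}$-HSVP-reduced) and $\basis_{[2,q+1]}$ is $\delta$-DSVP-reduced (hence $\delta\sqrt{\gamma_q}$-DHSVP-reduced), so $\basis_{[1,q+1]}$ is $\delta\sqrt{\gamma_q}$-twin-reduced with $d = q$. Fact~\ref{fact:twinsies}, Eq.~\eqref{eq:twin_volume}, then gives $\vol(\basis_{[1,q]})^{1/q} \le (\delta\sqrt{\gamma_q})^{q/(q-1)}\|\vec{b}_{q+1}^*\|$, i.e. $\vol(\basis_{[1,q]}) \le (\delta^2\gamma_q)^{q^2/(2(q-1))}\|\vec{b}_{q+1}^*\|^{q}$.

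Next I would bound $\vol(\basis_{[q+1,k]})$ using the primal conditions that $\basis_{[i,n]}$ is $\delta$-SVP-reduced for each $i\in[q+1,k]$. For such $i$, $\|\vec{b}_i^*\| = \|\pi_i(\vec{b}_i)\| \le \delta\lambda_1(\lat(\basis_{[i,n]})) \le \delta\sqrt{\gamma_{n-i+1}}\,\vol(\basis_{[i,n]})^{1/(n-i+1)}$. This is a recursive handle on the tail volumes: setting $V_i := \vol(\basis_{[i,n]})$ and $r_i := n-i+1$, we have $V_i = \|\vec{b}_i^*\|\cdot V_{i+1}$ and $\|\vec{b}_i^*\| \le \delta\sqrt{\gamma_{r_i}}\,V_i^{1/r_i}$, hence $V_i^{1-1/r_i}\le \delta\sqrt{\gamma_{r_i}}\,V_{i+1}$, i.e. $V_i \le (\delta\sqrt{\gamma_{r_i}})^{r_i/(r_i-1)} V_{i+1}^{r_i/(r_i-1)}$. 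Iterating this from $i=q+1$ down through $i=k$ (noting $V_{k+1} = \vol(\basis_{[k+1,n]}) = \vol(\basis_{[q+1,n]})/\vol(\basis_{[q+1,k]})$... — cleaner to telescope the other way) lets me express $\vol(\basis_{[q+1,k]}) = V_{q+1}/V_{k+1}$, and each step multiplies by a bounded factor involving $\delta^2\gamma_{q}$ (since all relevant ranks $r_i$ lie between $q+1$ and $n-q = k$, and $\gamma$ is increasing, one uses $\gamma_{r_i}\le\gamma_k$ or a careful accounting — one must check whether the $\gamma_q$ vs $\gamma_k$ bookkeeping in the theorem statement is tight, which it should be since the dominant contributions come from ranks near $q$).

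The main obstacle I expect is the combinatorics of the iterated inequality: carefully tracking how the exponents $\prod r_i/(r_i-1)$ accumulate across the $k-q = n-k$ steps and showing the product of decay factors collapses to exactly $(\delta^2\gamma_q)^{\frac{q+1}{q-1}\cdot\frac{n-k}{2k}}$ (after combining with the $\vol(\basis_{[1,q]})$ bound and taking the $k$-th root for Minkowski). In particular the $\frac{q+1}{q-1}$ factor is suggestive: $\frac{q+1}{q-1} = \frac{q/(q-1)}{q/(q+1)}$-ish, and it likely arises from combining the twin-reduction exponent $q/(q-1)$ on the first block with a geometric-series telescoping of the tail, normalized by the total rank $n = k+q$ appearing as $1/k$ in Minkowski's bound and $1/(q+1)$ in the twin-volume bound. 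I would set up the telescoping cleanly in terms of $\log\vol$, treat it as a linear recursion, solve for the closed form, then substitute; the key sanity check is the case $q = k$ (rank $2k$), where the bound should reduce to $\lambda_1(\lat(\basis_{[1,k]}))\lesssim \delta\sqrt{\gamma_k}(\delta^2\gamma_k)^{\frac{k+1}{2(k-1)}\cdot\frac12}\lambda_1(\lat)$, consistent with Gama–Nguyen slide reduction at $n=2k$.
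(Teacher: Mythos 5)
Your decomposition $\vol(\basis_{[1,k]}) = \vol(\basis_{[1,q]})\cdot\vol(\basis_{[q+1,k]})$ and your use of twin reduction plus Minkowski on $\lat(\basis_{[1,k]})$ match the paper's strategy, but your treatment of the tail factor $\vol(\basis_{[q+1,k]})$ has a genuine gap. You bound $\|\vec{b}_i^*\| \le \delta\lambda_1(\lat(\basis_{[i,n]})) \le \delta\sqrt{\gamma_{n-i+1}}\vol(\basis_{[i,n]})^{1/(n-i+1)}$ and then telescope over volumes. This severs the connection to $\lambda_1(\lat)$: the recursion closes into a bound on $\lambda_1(\lat(\basis_{[1,k]}))$ in terms of $\vol(\lat)$ (an HSVP-type statement), not in terms of $\lambda_1(\lat)$ as the theorem requires, and no amount of exponent bookkeeping fixes that. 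It would also accumulate Hermite constants $\gamma_{n-i+1}$ for $i\in[q+1,k]$, i.e.\ ranks ranging from $q+1$ \emph{up to} $k$ (not ``near $q$'' as you guess), whereas the theorem's tail contributes no Hermite constant at all. The missing idea is a case split. For each $i\in[q+1,k]$: if a shortest vector $\vec{v}$ of $\lat$ has $\pi_i(\vec{v})\neq 0$, then $\delta$-SVP-reducedness of $\basis_{[i,n]}$ gives $\|\vec{b}_i^*\| \le \delta\|\pi_i(\vec{v})\| \le \delta\lambda_1(\lat)$ directly; if instead $\pi_i(\vec{v})=0$ for some such $i$, then $\vec{v}\in\lat(\basis_{[1,i-1]})\subseteq\lat(\basis_{[1,k]})$ and the theorem holds trivially with $\lambda_1(\lat(\basis_{[1,k]}))=\lambda_1(\lat)$. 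In the non-degenerate case one gets $\prod_{i=q+1}^{k}\|\vec{b}_i^*\| \le (\delta\lambda_1(\lat))^{k-q}$ and $\|\vec{b}_{q+1}^*\|^q \le (\delta\lambda_1(\lat))^q$, which is exactly what combines with the twin-volume bound to give $(\delta\lambda_1(\lat))^k$ and hence the claimed factor after taking $k$-th roots. This degenerate case is not an edge case you can wave away; it is flagged explicitly in the paper (Footnote~\ref{foot:degenerate}) and must be handled separately.

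A secondary, smaller error: from Eq.~\eqref{eq:twin_volume} with $d=q$ you get $\vol(\basis_{[1,q+1]})^{1/(q+1)} \le (\delta\sqrt{\gamma_q})^{q/(q-1)}\|\vec{b}_{q+1}^*\|$, hence $\vol(\basis_{[1,q]}) \le (\delta\sqrt{\gamma_q})^{q(q+1)/(q-1)}\|\vec{b}_{q+1}^*\|^{q}$ after dividing by $\|\vec{b}_{q+1}^*\|$. Your claimed $\vol(\basis_{[1,q]})^{1/q} \le (\delta\sqrt{\gamma_q})^{q/(q-1)}\|\vec{b}_{q+1}^*\|$ asserts the exponent $q^2/(q-1)$ instead of $q(q+1)/(q-1)$; this is stronger than what the twin-reduction fact yields and is precisely where the $\frac{q+1}{q-1}$ in the theorem comes from, so your version would not reproduce the stated constant.
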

\begin{proof}
Let $\basis=(\mathbf{b}_{1}, \ldots, \mathbf{b}_{n})$. We distinguish two cases.

First, suppose that there exists an index $i \in [q+1,\max\{k,q+1\}]$ such that $\norm{\vec{b}_{i}^{\ast}} >\delta \lambda_{1}(\lat)$. Let $\vec{v}$ be a shortest non-zero vector of $\lat$. We claim that $\pi_i(\vec{v}) = 0$, i.e., that $\vec{v} \in \lat(\basis_{[1,i-1]})$. If this is not the case,  since $\basis_{[i,n]}$ is $\delta$-SVP-reduced, we have that
    \[
    \norm{\vec{b}_{i}^{\ast}}/\delta \le  \|\pi_i(\vec{v})\| \le  \|\vec{v}\| =\lambda_1(\lat),
\]    
which is a contradiction. Thus, we see that $\vec{v} \in \lat(\basis_{[1,i-1]}) \subseteq \lat(\basis_{[1,k]})$, and hence
 $\lambda_{1}(\lat(\vec{B}_{[1,k]}))=\lambda_{1}(\lat)$ (which is much stronger than what we need). 

Now, suppose that $\norm{\vec{b}_{i}^{\ast}} \le \delta \lambda_{1}(\lat)$ for all indices $i \in [q+1,\max\{k,q+1\}]$.
By definition, the primal and dual conditions imply that $\vec{B}_{[1,q+1]}$ is $\delta\sqrt{\gamma_q}$-twin-reduced.
Therefore, by Eq.~\eqref{eq:twin_volume} of Fact~\ref{fact:twinsies}, we have
 \begin{align*}
    \vol(\basis_{[1,k]}) &= \vol(\basis_{[1,q]}) \cdot \prod_{i=q+1}^{k}\norm{\vec{b}_{i}^{\ast}}\\
    &\le (\delta \sqrt{\gamma_{q}})^{q(q+1)/(q-1))} \|\vec{b}_{q}^{\ast}\|^{q}
    \cdot \prod_{i=q+1}^{k}\norm{\vec{b}_{i}^{\ast}}\\
    &\le (\delta^2 \gamma_{q})^{\frac{q+1}{q-1} \cdot \frac{n-k}{2}}
   (\delta\lambda_{1}(\lat))^{k}
   \; ,
  \end{align*}
where we have used 
the assumption that $\norm{\vec{b}_{i}^{\ast}} \le \delta \lambda_{1}(L)$ for all indices  $i \in [q+1,\max\{k,q+1\}]$ (and by convention we take the product to equal one in the special case when  $q = k$). 
By the definition of Hermite's constant, 
this implies that
\begin{equation*}
\lambda_{1}(\lat(\basis_{[1,k]})) \le \sqrt{\gamma_{k}}\vol(\basis_{[1,k]})^{1/k}\le \delta\sqrt{\gamma_k}  (\delta^2 \gamma_{q})^{\frac{q+1}{q-1} \cdot \frac{n-k}{2k}}
  \lambda_{1}(\lat) \;,    
\end{equation*}
as needed. 
\end{proof}

\subsection{The slide reduction algorithm for \texorpdfstring{$n \le 2k$}{n at most 2k}}
\label{subsec:Algorithm for bi-slide-reduction}

We now present our slight generalization of Gama and Nguyen's slide reduction algorithm that works for all $k+2 \leq n \leq 2k$.

\begin{algorithm}[h]
\small
\caption{The slide reduction algorithm for $n \le 2k$ (adapted from \cite[Algorithm 1]{GNFindingShort08})
\label{alg:BSR}}
\begin{algorithmic}[1]
\REQUIRE Block size $k$, slack $\eps > 0$, approximation 
factor $\delta \geq 1$,
a basis $\basis = (\mathbf{b}_{1}, \ldots, \mathbf{b}_{n}) 
\in \mZ^{m \times n}$ of a lattice $\lat$ 
with rank $n =k+q$ where $2\le q \le k$, and access to a $\delta$-SVP oracle for lattices with rank at most $k$.

\ENSURE  A $((1+\eps)\delta,k)$-slide-reduced basis of $\lat$.

\WHILE{$\vol(\vec{B}_{[1,q]})^{2}$ 
is modified by the loop} \label{BSR:while}

\STATE $\delta$-SVP-reduce 
$\basis_{[1,q]}$.

\FOR{$i = q+1$ \TO $\max\{k,q+1\}$} 
\STATE $\delta$-SVP reduce $\basis_{[i,n]}$.
\ENDFOR

\STATE Find 
a new basis $\vec{C} := (\vec{b}_1,\vec{c}_2,\ldots, \vec{c}_{q+1},\vec{b}_{q+2},\ldots,\vec{b}_n)$ of $\lat$ by $\delta$-DSVP-reducing $\basis_{[2,q+1]}$. 

\IF{$(1+\eps)\|\vec{b}_{q+1}^*\| < \| \vec{c}_{q+1}^*\|$}
\STATE $\basis \leftarrow \vec{C}$. \label{BSR:DSVP}
\ENDIF

\ENDWHILE\label{BSR:endwhiel}

\RETURN $\vec{B}$.\label{BSR:return}
\end{algorithmic}
\end{algorithm}

Our proof that Algorithm~\ref{alg:BSR} runs in polynomial time (excluding oracle calls)
is essentially identical to the proof in~\cite{GNFindingShort08}.
\begin{theorem} \label{th:correctness of bi-slide-reduction}
For $\eps \geq 1/\poly(n)$, Algorithm~\ref{alg:BSR} runs in polynomial time (excluding oracle calls), makes polynomially many calls to its $\delta$-SVP oracle, and outputs a $((1+\eps)\delta, k)$-slide-reduced basis of the input lattice $\lat$. 
\end{theorem}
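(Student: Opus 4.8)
The plan is to follow the same three-part strategy used by Gama and Nguyen: (i) show that each iteration of the \texttt{while} loop runs in polynomial time and makes polynomially many oracle calls, (ii) bound the total number of iterations of the \texttt{while} loop by a polynomial in the input size, and (iii) observe that upon termination the output basis satisfies the conditions of Definition~\ref{def:slide-reduction} up to the $(1+\eps)$ slack. Part (i) is immediate from the discussion in Section~\ref{subsec:lattice reduction}: each of the block-reduction subprocedures ($\delta$-SVP-reducing $\basis_{[1,q]}$, $\delta$-SVP-reducing each $\basis_{[i,n]}$ for $i \in [q+1,\max\{k,q+1\}]$, and $\delta$-DSVP-reducing $\basis_{[2,q+1]}$) costs one or $O(k)$ oracle calls plus polynomial-time size reduction, and none of these increases $\|\basis^*\|$ by more than a $2^{\poly(n)}$ factor, so all intermediate quantities stay polynomially bounded in bit-length throughout the execution.

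The heart of the argument is part (ii), the potential-function bound. Following~\cite{GNFindingShort08}, I would track the quantity $\vol(\basis_{[1,q]})^2$ (an integer, since $\lat \subseteq \Z^m$, after clearing denominators appropriately — or more carefully, a rational with bounded denominator, so that a multiplicative decrease of $1+\eps$ can happen only polynomially often before the quantity would have to drop below $1$). Each pass through the loop: the $\delta$-SVP-reduction of $\basis_{[1,q]}$ and of the blocks $\basis_{[i,n]}$ leaves $\vol(\basis_{[1,q]})$ unchanged or decreased and does not increase $\|\basis^*\|$; the $\delta$-DSVP-reduction step produces a candidate basis $\vec{C}$, and we only accept it (updating $\basis \leftarrow \vec C$) when $(1+\eps)\|\vec b_{q+1}^*\| < \|\vec c_{q+1}^*\|$. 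When we accept, $\vol(\basis_{[1,q]}) = \vol(\basis_{[1,q+1]})/\|\vec b_{q+1}^*\|$ strictly decreases by a factor of at least $1+\eps$ (since $\vol(\basis_{[1,q+1]})$ is preserved by the DSVP operation on $\basis_{[2,q+1]}$ and the new $(q+1)$st Gram-Schmidt norm is larger). So each loop iteration that does not terminate the \texttt{while} forces a multiplicative drop of $\vol(\basis_{[1,q]})^2$ by at least $(1+\eps)^2$. Since this quantity starts at most $\|\basis\|^{2q} = 2^{\poly(n,\log\|\basis\|)}$ and is bounded below by a fixed positive quantity $2^{-\poly(n,\log\|\basis\|)}$ (again using that $\|\basis^*\|$ never blows up, combined with integrality of the underlying lattice), and since $\log(1+\eps) \geq \Omega(\eps) \geq 1/\poly(n)$, the number of iterations is $\poly(n,\log\|\basis\|)$. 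Multiplying by the $\poly$ cost per iteration gives the claimed overall bound.

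Finally, for part (iii): when the loop exits, $\vol(\basis_{[1,q]})^2$ was \emph{not} modified by the last pass, which means the acceptance test failed, i.e.\ $(1+\eps)\|\vec b_{q+1}^*\| \geq \|\vec c_{q+1}^*\|$ where $\vec C$ is the result of $\delta$-DSVP-reducing $\basis_{[2,q+1]}$. Together with the fact that the primal reductions performed earlier in that same pass left $\basis_{[1,q]}$ $\delta$-SVP-reduced and each $\basis_{[i,n]}$ ($i\in[q+1,\max\{k,q+1\}]$) $\delta$-SVP-reduced — and these properties are preserved because we did \emph{not} update $\basis$ — this yields exactly that $\basis_{[2,q+1]}$ is $(1+\eps)\delta$-DSVP-reduced (the DSVP-reduction by the oracle on $\basis_{[2,q+1]}$ can shrink $\|\vec b_{q+1}^*\|$ by at most $1+\eps$, hence the reversed dual basis is within a $(1+\eps)$ factor of $\delta$-SVP-reduced) while the primal conditions hold exactly. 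Hence $\basis$ is $((1+\eps)\delta,k)$-slide-reduced, as required. The main obstacle, and the only place requiring real care, is the potential-function accounting in part (ii): one must argue that $\vol(\basis_{[1,q]})^2$ has a denominator bounded by $2^{\poly}$ and is bounded away from zero, so that it can only decrease multiplicatively by $\poly(n,\log\|\basis\|)$ many steps — this is where the "all intermediate entries remain polynomially bounded" remark from Section~\ref{subsec:lattice reduction} does the essential work, exactly as in~\cite{GNFindingShort08,LNApproximatingDensest14}.
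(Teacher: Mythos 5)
Your proposal is correct and follows essentially the same argument as the paper: the same potential $P(\basis)=\vol(\basis_{[1,q]})^{2}$ (which, for an integer lattice, is simply a positive integer, so no denominator bookkeeping is needed), the same observation that only an accepted DSVP step changes it and then drops it by a factor $(1+\eps)^{2}$, and the same appeal to the complexity remarks of Section~\ref{subsec:lattice reduction} for the per-iteration cost; your part~(iii) just spells out the termination-implies-reducedness step that the paper states without elaboration.
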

\begin{proof}
First, notice that if Algorithm~\ref{alg:BSR} terminates, then its output must be 
$((1+\eps)\delta, k)$-slide-reduced. So, we only need to argue that the algorithm runs in polynomial time (excluding oracle calls).

Let $\basis_0 \in \Z^{m \times n}$ be the input basis and let $\basis \in \Z^{m \times n}$
denote the current basis during the execution of the algorithm. 
As is common in the analysis of basis reduction algorithms~\cite{LLLFactoringPolynomials82,GNFindingShort08,LNApproximatingDensest14}, 
we consider an integral potential of the form
\begin{equation*}
P(\basis) := \vol(\basis_{[1,q]})^{2}  \in \mZ^{+}
\; .\label{eq:potential for BSR}
\end{equation*}
The initial potential satisfies $\log P(\basis_{0}) \leq
2q \cdot \log \|\basis_{0}\|$, and every operation in Algorithm~\ref{alg:BSR} either
preserves or significantly decreases $P(\basis)$. More precisely, if the $\delta$-DSVP-reduction step (i.e., Step \ref{BSR:DSVP})  
occurs, 
then the potential $P(\basis)$ 
decreases by a multiplicative factor of at least  $(1+\varepsilon)^{2}$. No other step changes $\lat(\basis_{[1,q]})$ or $P(\basis)$.

Therefore, Algorithm~\ref{alg:BSR} updates $\lat(\basis_{[1,q]})$ at most $\frac{\log P(\basis_{0})}{2\log
(1+\varepsilon)}$ times, and hence it makes at most $\frac{qk\log \|\basis_{0}\|}{\log (1+\varepsilon)}$ calls to the $\delta$-SVP-oracle. From the complexity statement in Section \ref{subsec:lattice reduction},  
it follows that Algorithm~\ref{alg:BSR} runs efficiently (excluding the running time of oracle calls).
\end{proof}

\begin{corollary}
\label{cor:main1}
For any constant $c \in (1/2, 1]$ and $\delta := \delta(n) \geq 1$, there is an efficient reduction from $O(\delta^{2c+1} n^c)$-SVP on lattices with rank $n$ to $\delta$-SVP on lattices with rank $k := \lceil n/(2c) \rceil$.
\end{corollary}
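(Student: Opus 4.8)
The plan is to instantiate Theorem~\ref{th:property of bi-slide-reduction} (together with Theorem~\ref{th:correctness of bi-slide-reduction}, which supplies the algorithm) with the right choice of block size, and then close the gap between $\lambda_1(\lat(\basis_{[1,k]}))$ and an actual short vector of $\lat$ by one extra oracle call. First I would set $k := \lceil n/(2c)\rceil$ and $q := n-k$; since $c \in (1/2,1]$ we have $n/2 \le k < n$, so $1 \le q \le k$ and we are in the regime handled by Algorithm~\ref{alg:BSR} (after dealing with the small-$q$ corner cases $q \in \{0,1\}$ separately, which are trivial). Run Algorithm~\ref{alg:BSR} with, say, $\eps = 1$ (any constant slack is fine, it only affects the hidden constant) to obtain a $(2\delta,k)$-slide-reduced basis $\basis$ of $\lat$ in polynomial time using polynomially many calls to the $\delta$-SVP oracle on rank-$k$ lattices.

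Next I would apply Theorem~\ref{th:property of bi-slide-reduction} with approximation factor $2\delta$ in place of $\delta$ to conclude
\[
\lambda_1(\lat(\basis_{[1,k]})) \le 2\delta\sqrt{\gamma_k}\,(4\delta^2\gamma_q)^{\frac{q+1}{q-1}\cdot\frac{n-k}{2k}}\,\lambda_1(\lat)
\;.
\]
Then make one more call to the $\delta$-SVP oracle on the rank-$k$ lattice $\lat(\basis_{[1,k]})$ to find a vector $\vec{v} \in \lat(\basis_{[1,k]}) \subseteq \lat$ with $\|\vec{v}\| \le \delta\lambda_1(\lat(\basis_{[1,k]}))$, hence $\|\vec{v}\| \le O(\delta^2\sqrt{\gamma_k}(\delta^2\gamma_q)^{\frac{q+1}{q-1}\cdot\frac{n-k}{2k}})\lambda_1(\lat)$. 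Since $\gamma_k = \Theta(k) = \Theta(n)$ and $q \le k$, we have $\gamma_q = O(n)$ as well, and $\delta^2\sqrt{\gamma_k} = O(\delta^2\sqrt{n})$; the remaining job is to check that $(\delta^2\gamma_q)^{\frac{q+1}{q-1}\cdot\frac{n-k}{2k}} = O(\delta^{2c'}n^{c'})$ for the right exponent. The factor $\frac{q+1}{q-1}$ tends to $1$ and is bounded by a constant for $q \ge 2$ (and the $q = o(k)$ edge is harmless); the essential exponent is $\frac{n-k}{2k} = \frac{q}{2k}$. By our choice $k = \lceil n/(2c)\rceil \ge n/(2c)$, so $q = n-k \le n(1 - \tfrac{1}{2c})$ and $\frac{q}{2k} \le c(1 - \tfrac{1}{2c}) = c - \tfrac12$, giving $\frac{q+1}{q-1}\cdot\frac{q}{2k} \le c - \tfrac12 + o(1)$. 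Hence $(\delta^2\gamma_q)^{\frac{q+1}{q-1}\cdot\frac{q}{2k}} = O((\delta^2 n)^{c-1/2})$ up to subconstant factors in the exponent, and combining with the $O(\delta^2\sqrt{n})$ prefactor yields a total bound of $O(\delta^{2c+1} n^{c})$, exactly as claimed.

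The main obstacle I anticipate is bookkeeping rather than conceptual: making sure the exponent arithmetic is airtight when $q$ is small relative to $k$ (so that $\frac{q+1}{q-1}$ is not close to $1$) and when $k$ does not divide $n$ exactly, so that the ceiling in $k = \lceil n/(2c)\rceil$ does not spoil the inequality $\frac{q}{2k} \le c - \tfrac12$. One should also confirm the degenerate cases $q \le 1$ (where $n \le k+1$ and one can find a shortest vector essentially directly with one or two oracle calls) are dispatched so that Corollary~\ref{cor:main1} is stated cleanly for all relevant $n$. Finally, I would note that absorbing all the $(1+\eps)$ and constant-$\gamma$ factors into the big-$O$ is legitimate precisely because $\gamma_k, \gamma_q = \Theta(n)$, so no super-polynomial loss is hidden.
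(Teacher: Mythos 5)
Your proposal is correct and follows essentially the same route as the paper's proof: run Algorithm~\ref{alg:BSR} to obtain a slide-reduced basis, invoke Theorem~\ref{th:property of bi-slide-reduction}, make one additional $\delta$-SVP oracle call on $\lat(\basis_{[1,k]})$, and verify that $\tfrac{q+1}{q-1}\cdot\tfrac{n-k}{2k} \le c-\tfrac12+o(1)$ so the bound collapses to $O(\delta^{2c+1}n^c)$. The only (immaterial) differences are your choice of constant slack $\eps=1$ versus the paper's $\eps=1/n$, and your more explicit treatment of the exponent arithmetic and the degenerate small-$q$ cases.
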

\begin{proof}
On input (a basis for) an integer lattice $\lat \subseteq \Z^m$ with rank $n$,  the reduction first calls Algorithm~\ref{alg:BSR} to compute a $((1+\eps)\delta, k)$-slide-reduced basis $\basis$ of $\lat$  with, say, $\eps = 1/n$. The reduction then uses its $\delta$-SVP oracle once more on $\basis_{[1,k]}$ and returns the resulting nonzero short lattice vector. 

It is immediate from Theorem~\ref{th:correctness of bi-slide-reduction} that this reduction is efficient, and by Theorem~\ref{th:property of bi-slide-reduction}, the output vector is a $\delta'$-approximate shortest vector, where
\[
\delta' = \delta^2 \sqrt{\gamma_k}  ((1+\eps)^2\delta^2 \gamma_{q})^{\frac{q+1}{q-1} \cdot \frac{n-k}{2k}} \leq O(\delta^{2c+1} n^c)
\; ,
\]
as needed.
\end{proof}

\section{Slide reduction  for \texorpdfstring{$n\ge 2k$}{n at least 2k}}
\label{sec:Generalized-slide-reduction}

We now introduce a generalized version of slide reduction for lattices with any rank $n \geq 2k$. As we explained in Section~\ref{sec:techniques}, at a high level, our generalization of the definition from~\cite{GNFindingShort08} is the same as the original, except that (1) our first block $\basis_{[1,k+q]}$ is bigger than the others (out of necessity, since we can no longer divide our basis evenly into disjoint blocks of size $k$); and (2) we only $\eta$-HSVP reduce the first block (since we cannot afford to $\delta$-SVP reduce a block with size larger than $k$). Thus, our notion of slide reduction can be restated as ``the first block and the first dual block are $\eta$-(D)HSVP reduced and the rest of the basis $\basis_{[k+q+1,n]}$ is slide-reduced in the sense of~\cite{GNFindingShort08}.''\footnote{Apart from the approximation factor $\delta$, there is one minor difference between our primal 
conditions and those of~\cite{GNFindingShort08}. We only require the primal blocks to be SVP-reduced, while~\cite{GNFindingShort08} required them to be HKZ-reduced, which is a stronger condition. It is clear from the proof in~\cite{GNFindingShort08} that only SVP reduction is required, as was observed in~\cite{MWPracticalPredictable16}.} 

However, the specific value of $\eta$ that we choose in our definition below might look unnatural at first. We first present the definition and then explain where $\eta$ comes from.

\begin{definition}[Slide reduction]\label{def:GSR}
Let $n, k, p, q$ be integers such that $n = pk + q$ with $p,k \geq 2$ and $0 \le q \le k-1$, and let $\delta \geq 1$.
A basis $\basis \in \R^{m \times n}$ is {\em $(\delta,k)$-slide-reduced} if it is size-reduced and
satisfies the following three sets of conditions.
\begin{enumerate}
\item Mordell conditions: The block $\basis_{[1,k+q]}$ is $\eta$-HSVP-reduced and the block $\basis_{[2,k+q+1]}$ is $\eta$-DHSVP-reduced for $\eta := (\delta^2 \gamma_{k})^{\frac{k+q-1}{2(k-1)}}$.
\item Primal conditions: for all $i \in [1, p-1]$, the block 
$\basis_{[i k+q+1,(i+1)k+q]}$ is $\delta$-SVP-reduced.
\item Dual conditions: for all $i \in [1, p-2]$,  the block 
$\basis_{[ik+q+2,(i+1)k+q+1]}$ is $\delta$-DSVP-reduced.\footnote{When $p = 2$, there are simply no dual conditions.}
\end{enumerate}
\end{definition}

There are two ways to explain our specific choice of $\eta$. Most simply, notice that the output of the DBKZ algorithm---due to~\cite{MWPracticalPredictable16} and presented in Section~\ref{sec:MW}---is $\eta$-HSVP reduced when the input basis has rank $k+q$ (up to some small slack $\eps$). In other words, one reason that we choose this value of $\eta$ is because we actually can $\eta$-HSVP reduce a block of size $k+q$ efficiently with access to a $\delta$-SVP oracle for lattices with rank $k$. If we could do better, then we would in fact obtain a better algorithm, but we do not know how. Second, this value of $\eta$ is natural in this context because it is the choice that ``makes the final approximation factor for HSVP match the approximation factor for the first block.'' I.e., the theorem below shows that when we plug in this value of $\eta$, a slide-reduced basis of rank $n$ is $(\delta^2 \gamma_{k})^{\frac{n-1}{2(k-1)}}$-HSVP, which nicely matches the approximation factor of $\eta = (\delta^2 \gamma_{k})^{\frac{k+q-1}{2(k-1)}}$-HSVP that we need for the first block (whose rank is $k+q$). At a technical level, this is captured by Fact~\ref{fact:twinsies} and Lemma~\ref{lem:pilingup}.

Of course, the fact that these two arguments suggest the same value of $\eta$ is not a coincidence. Both arguments are essentially disguised proofs of Mordell's inequality, which says that $\gamma_n \leq \gamma_k^{(n-1)/(k-1)}$ for $2 \le k \le n$. E.g., with $\delta = 1$ the primal Mordell condition says that $\vec{b}_1$ yields a witness to Mordell's inequality for $\basis_{[1,k+q]}$.

\begin{theorem}\label{th:SR2}
For any $\delta \ge 1$, $k \geq 2$, and $n \geq 2k$,
if $\basis=(\vec{b}_{1}, \ldots, \vec{b}_{n}) \in \R^{m \times n}$ is a $(\delta,k)$-slide-reduced basis of a lattice $\lat$, then
\begin{equation}
\|\mathbf{b}_{1}\| \leq
(\delta^{2} \gamma_{k})^{\frac{n-1}{2(k-1)}}\vol(\lat)^{1/n} \;.
\label{eq:Gslide-reduction for HSVP}
\end{equation}
Furthermore, if $\lambda_{1}(\lat(\basis_{[1,k+q]}))> \lambda_{1}(\lat)$ , then
		\begin{equation}
		\|\mathbf{b}_{1}\| \leq
		\delta(\delta^2 \gamma_{k})^{\frac{n-k}{k-1}} \lambda_{1}(\lat) \; ,
		\label{eq:Gslide-reduction for ASVP}
		\end{equation}
		where $0 \leq q \leq k-1$ is such that $n=pk + q$.
\end{theorem}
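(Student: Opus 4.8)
The plan is to prove the HSVP bound~\eqref{eq:Gslide-reduction for HSVP} by repeatedly applying the gluing lemma (Lemma~\ref{lem:pilingup}, Item~\ref{item:piling_HSVP}) along the block decomposition $\basis_{[1,k+q]},\ \basis_{[k+q+1,2k+q]},\ \ldots,\ \basis_{[(p-1)k+q+1,n]}$, using $\eta := (\delta^2\gamma_k)^{\frac{k+q-1}{2(k-1)}}$ throughout. First I would set up the two ingredients needed to glue the first ``big'' block to the rest. The first block $\basis_{[1,k+q]}$ is $\eta$-HSVP-reduced by the Mordell condition, i.e.\ $\eta^{(k+q)-1}$-HSVP-reduced in the notation of Lemma~\ref{lem:pilingup} since $\eta^{k+q-1} = (\delta^2\gamma_k)^{\frac{k+q-1}{2(k-1)}\cdot(k+q-1)}$... wait --- here I must be careful: Item~\ref{item:piling_HSVP} wants the first block to be $\eta'^{\,d-1}$-HSVP-reduced with $d = k+q$, and our block is $\eta$-HSVP-reduced with this particular $\eta$, so the exponent bookkeeping needs $\eta' := \eta^{1/(k+q-1)} = (\delta^2\gamma_k)^{\frac{1}{2(k-1)}}$; this $\eta'$ is the ``per-step decay rate'' and is exactly the quantity appearing in the remark before Lemma~\ref{lem:pilingup}. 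So the cleanest route is: set $\eta' := (\delta^2\gamma_k)^{\frac{1}{2(k-1)}}$, observe the first block is $\eta'^{\,k+q-1}$-HSVP-reduced, and observe each primal block $\basis_{[ik+q+1,(i+1)k+q]}$ is $\delta$-SVP-reduced hence $\delta\sqrt{\gamma_k} = \eta'^{\,k-1}$-HSVP-reduced.

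The second ingredient is the Gram--Schmidt decay $\|\vec{b}_1\| \le \eta'^{\,2(k+q)}\|\vec{b}_{k+q+1}^*\|$, and more generally $\|\vec{b}_{ik+q+1}^*\| \le \eta'^{\,2k}\|\vec{b}_{(i+1)k+q+1}^*\|$ between consecutive primal blocks. These come from Fact~\ref{fact:twinsies}: the Mordell conditions say $\basis_{[1,k+q+1]}$ is $\eta\sqrt{\gamma_k}$-... no, more directly, the primal block $\basis_{[ik+q+1,(i+1)k+q]}$ being $\delta$-SVP-reduced and the dual block $\basis_{[ik+q+2,(i+1)k+q+1]}$ being $\delta$-DSVP-reduced means $\basis_{[ik+q+1,(i+1)k+q+1]}$ is $\delta\sqrt{\gamma_k} = \eta'^{\,k-1}$-twin-reduced, so Eq.~\eqref{eq:twin_decay} gives $\|\vec{b}_{ik+q+1}^*\| \le (\eta'^{\,k-1})^{2k/(k-1)}\|\vec{b}_{(i+1)k+q+1}^*\| = \eta'^{\,2k}\|\vec{b}_{(i+1)k+q+1}^*\|$; similarly the first big twin block $\basis_{[1,k+q+1]}$ is $\eta'^{\,k+q-1}$-twin-reduced (here the degenerate/small-$p$ case $p=2$ needs a quick separate check since there may be no dual conditions, but then there is only the big block and one primal block and gluing is a single step). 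Then I would induct: gluing $\basis_{[1,(p-1)k+q]}$ (inductively $\eta'^{\,(p-1)k+q-1}$-HSVP-reduced) with the last primal block $\basis_{[(p-1)k+q+1,n]}$ (which is $\delta$-SVP-reduced, hence $\eta'^{\,k-1}$-HSVP-reduced), using the decay bound $\|\vec{b}_1\|\le \eta'^{\,2((p-1)k+q)}\|\vec{b}_{(p-1)k+q+1}^*\|$ chained from the per-block decay inequalities, and Item~\ref{item:piling_HSVP} with $d = (p-1)k+q$, $n-d = k$ yields that $\basis$ is $\eta'^{\,n-1} = (\delta^2\gamma_k)^{\frac{n-1}{2(k-1)}}$-HSVP-reduced, which is~\eqref{eq:Gslide-reduction for HSVP}.

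For the SVP bound~\eqref{eq:Gslide-reduction for ASVP} under the hypothesis $\lambda_1(\lat(\basis_{[1,k+q]})) > \lambda_1(\lat)$, I would switch to Item~\ref{item:piling_SVP} of the gluing lemma, applied with $d = k+q$. The hypothesis is exactly the strict-inequality condition $\lambda_1(\lat(\basis)) \le \lambda_1(\lat) < \lambda_1(\lat(\basis_{[1,d]}))$ needed there. For the $\alpha$ parameter I need $\|\vec{b}_1\| \le \alpha\|\vec{b}_{k+q+1}^*\|$; the chained decay gives $\|\vec{b}_1\| \le \eta'^{\,2(k+q)}\cdot\prod_{i=1}^{p-2}\eta'^{\,2k}\cdot$ (a factor relating $\|\vec{b}_{(p-1)k+q+1}^*\|$ to itself)~--- actually I only need to reach $\vec{b}_{k+q+1}^*$, so $\alpha = \eta'^{\,2(k+q)}$ suffices... but that is not tight enough; instead I should glue differently, taking $d = k+q$ and treating $\basis_{[k+q+1,n]}$ as a single $\beta$-SVP-reduced block. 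By the same twin/decay argument, $\basis_{[k+q+1,n]}$ (a rank-$(p-1)k$ basis that is itself slide-reduced in the sense of~\cite{GNFindingShort08}) satisfies $\|\vec{b}_{k+q+1}\| \le (\delta^2\gamma_k)^{\frac{(p-1)k-k}{k-1}}\lambda_1(\lat(\basis_{[k+q+1,n]}))$ by the Gama--Nguyen bound (which is itself a consequence of iterating Lemma~\ref{lem:pilingup} Item~\ref{item:piling_SVP}), giving $\beta = (\delta^2\gamma_k)^{\frac{(p-2)k}{k-1}}$; and the first Mordell/twin block gives $\|\vec{b}_1\| \le \eta'^{\,2(k+q)}\|\vec{b}_{k+q+1}^*\| = (\delta^2\gamma_k)^{\frac{k+q}{k-1}}\|\vec{b}_{k+q+1}^*\|$, so $\alpha = (\delta^2\gamma_k)^{\frac{k+q}{k-1}}$. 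Then Item~\ref{item:piling_SVP} yields $\|\vec{b}_1\| \le \alpha\beta\,\lambda_1(\lat) = (\delta^2\gamma_k)^{\frac{k+q+(p-2)k}{k-1}}\lambda_1(\lat) = (\delta^2\gamma_k)^{\frac{(p-1)k+q}{k-1}}\lambda_1(\lat) = (\delta^2\gamma_k)^{\frac{n-k}{k-1}}\lambda_1(\lat)$, and one extra factor of $\delta$ from the innermost SVP-oracle call on $\basis_{[(p-1)k+q+1,n]}$... I would double-check whether the stated bound carries the lone $\delta$ out front as in~\eqref{eq:Gslide-reduction for ASVP}; it does, and it comes precisely from the fact that $\|\vec{b}_{k+q+1}\| = \|\vec{b}_{k+q+1}^*\|$ need not itself be $\lambda_1$ of its block, so the final $\beta$-SVP-reduction contributes $\delta\sqrt{\gamma_k}$ rather than $\sqrt{\gamma_k}$ somewhere in the chain. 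The main obstacle is this exponent bookkeeping --- making sure the per-block twin-decay exponents $\eta'^{\,2k}$ multiply up correctly across $p$ blocks and that the ``big block'' of size $k+q$ contributes $\eta'^{\,2(k+q)}$ rather than $\eta'^{\,2k}$, together with cleanly handling the $p=2$ boundary case where the dual conditions are vacuous.
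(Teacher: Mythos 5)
Your proposal is correct and follows essentially the same route as the paper: twin-reduction of the big first block via Fact~\ref{fact:twinsies}, the Gama--Nguyen bounds on the tail $\basis_{[k+q+1,n]}$ (which the paper delegates to Theorem~\ref{th:SR1} in the appendix and you re-derive inline by chaining the per-block twin decay), and then Lemma~\ref{lem:pilingup} with $d=k+q$ for both conclusions. Your exponent bookkeeping with $\eta' = (\delta^2\gamma_k)^{1/(2(k-1))}$, including the $p=2$ boundary case and the lone factor of $\delta$ in the SVP bound, checks out.
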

\begin{proof}
Let $d := k + q$. Theorem~\ref{th:SR1} of Appendix~\ref{app:old_slide} shows that $\basis_{[d+1,n]}$ is both
$(\delta^{2} \gamma_{k})^{\frac{n-d-1}{2(k-1)}}$-HSVP-reduced
and $(\delta^{2} \gamma_{k})^{\frac{n-d-k}{(k-1)}}$-SVP-reduced. (We relegate this theorem and its proof to the appendix because it is essentially just a restatement of~\cite[Theorem 1]{GNFindingShort08}, since $\basis_{[d+1,n]}$ is effectively just a slide-reduced basis in the original sense of~\cite{GNFindingShort08}.)
Furthermore, $\basis_{[1,d+1]}$ is $(\delta^{2} \gamma_{k})^{\frac{d-1}{2(k-1)}}$-twin-reduced,
so that $\|\vec{b}_1\| \le (\delta^{2} \gamma_{k})^{\frac{d}{k-1}} \|\vec{b}_{d+1}^*\|$.
Applying Lemma~\ref{lem:pilingup} then yields both Eq.~\eqref{eq:Gslide-reduction for HSVP} and Eq.~\eqref{eq:Gslide-reduction for ASVP}.
\end{proof}

\subsection{The slide reduction algorithm for \texorpdfstring{$n \ge 2k$}{n at least 2k}}
\label{subsec:Algorithm for GSR}

We now present our slight generalization of Gama and Nguyen's slide reduction algorithm that works for all $ n \ge 2k$. Our proof that the algorithm  runs in polynomial time (excluding oracle calls) is essentially identical to the proof in~\cite{GNFindingShort08}.

\begin{algorithm}
\small
\caption{The slide-reduction algorithm for $n \ge 2k$ 
\label{alg:GSR}}
\begin{algorithmic}[1]
\REQUIRE Block size $k\ge 2$, slack $\eps > 0$, approximation factor $\delta \geq 1$,
basis $\basis = (\mathbf{b}_{1}, \ldots, \mathbf{b}_{n}) \in
\Z^{m \times n}$ of a lattice $\lat$ of rank $n =pk+q\ge 2k$ for $0\le q \le k-1$, and access to a $\delta$-SVP oracle for lattices with rank  $k$.

\ENSURE  A $((1 + \varepsilon)\delta,k)$-slide-reduced basis of $\lat(\basis)$.

\WHILE{$\vol(\basis_{[1,ik+q]})^{2}$ is modified by the loop for some 
$i \in [1, p-1]$}\label{GSR:while}

\STATE $(1+\eps)\eta$-HSVP-reduce $\basis_{[1,k+q]}$ 
using Alg.~\ref{alg:SDBKZ} for $\eta := (\delta^2 \gamma_{k})^{\frac{k+q-1}{2(k-1)}}$.\label{GSR:Mordell}

\FOR{$i = 1$ \TO $p-1$} 
 \STATE $\delta$-SVP-reduce 
$\basis_{[ik+q+1,(i+1)k+q]}$. 
\label{GSR:SVP}

\ENDFOR

\IF{$\basis_{[2,k+q+1]}$ is not $(1+\eps)\eta$-DHSVP-reduced} 

\STATE $(1+\eps)^{1/2} \eta$-DHSVP-reduce $\basis_{[2,k+q+1]}$ using 
Alg.~\ref{alg:SDBKZ}. \label{step:dual_mordell}

\ENDIF

\FOR{$i=1$ \TO $p-2$}

\STATE Find a new basis $\vec{C} := (\vec{b}_1,\ldots, \vec{b}_{ik+q+1},\vec{c}_{ik+q+2}, \ldots, \vec{c}_{(i+1)k+q+1},\vec{b}_{ik+q+2},\ldots,\vec{b}_n)$ of $\lat$ by $\delta$-DSVP-reducing $\basis_{[ik+q+2,(i+1)k+q+1]}$.

\IF{$(1+\eps) \|\vec{b}_{(i+1)k+q+1}^*\|< \| \vec{c}_{(i+1)k+q+1}^*\|$}

\STATE $\basis \leftarrow \vec{C}$. \label{step:GSR_DSVP}

\ENDIF

\ENDFOR 

\ENDWHILE\label{GSR:endwhile}

\RETURN $\basis$.
\end{algorithmic}
\end{algorithm}

\begin{theorem} \label{th:correctness of G-slide-reduction}
For $\eps \in [1/\poly(n),1]$, Algorithm~\ref{alg:GSR} runs in polynomial time (excluding oracle calls), makes polynomially many calls to its $\delta$-SVP oracle, and outputs a $((1+\eps)\delta, k)$-slide-reduced basis of the input lattice $\lat$. 
\end{theorem}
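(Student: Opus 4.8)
The plan is to mimic the proof of Theorem~\ref{th:correctness of bi-slide-reduction} (and the original argument in~\cite{GNFindingShort08}). First I would observe that if Algorithm~\ref{alg:GSR} terminates, then its output is necessarily $((1+\eps)\delta,k)$-slide-reduced: the \textbf{while} loop exits only when none of the volumes $\vol(\basis_{[1,ik+q]})^2$ is modified during a full pass, and at that point Step~\ref{GSR:Mordell} has made $\basis_{[1,k+q]}$ be $(1+\eps)\eta$-HSVP-reduced, Step~\ref{GSR:SVP} has made each $\basis_{[ik+q+1,(i+1)k+q]}$ be $\delta$-SVP-reduced, and the dual steps have made $\basis_{[2,k+q+1]}$ be $(1+\eps)\eta$-DHSVP-reduced and each $\basis_{[ik+q+2,(i+1)k+q+1]}$ be $\delta$-DSVP-reduced (otherwise the corresponding volume would still be changing). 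Size-reduction can be enforced as a final cleanup step without affecting any of these conditions or the Gram--Schmidt norms. So it remains only to bound the number of iterations.

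Next I would introduce the integral potential
\[
P(\basis) := \prod_{i=1}^{p-1} \vol(\basis_{[1,ik+q]})^{2} \in \Z^{+}
\; ,
\]
whose logarithm is initially at most $\poly(n)\cdot \log\|\basis_0\|$. I would then check that every step of the algorithm either leaves $P(\basis)$ unchanged or decreases it by a multiplicative factor of at least $(1+\eps)$. The HSVP-reduction step (Step~\ref{GSR:Mordell}) only modifies vectors inside $\basis_{[1,k+q]}$ and does not change $\lat(\basis_{[1,k+q]})$ as a projected lattice, hence preserves every $\vol(\basis_{[1,ik+q]})$; likewise the primal SVP-reduction steps act within blocks $\basis_{[ik+q+1,(i+1)k+q]}$ and do not change $\|\mathbf{B}^*\|$ in a way that alters any of the relevant prefix volumes (using the complexity statement in Section~\ref{subsec:lattice reduction}, which guarantees $\|\mathbf{C}^*\|\le\|\mathbf{B}^*\|$ for SVP-reduction and only a bounded blow-up otherwise). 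The only steps that change a prefix volume are the dual updates: when the dual-HSVP step~\ref{step:dual_mordell} or a dual-SVP update~\ref{step:GSR_DSVP} is actually performed, the corresponding $\vol(\basis_{[1,ik+q]})^2$ strictly decreases by a factor $\ge(1+\eps)$ (the $\|\vec{b}^*_{(i+1)k+q+1}\|$ comparison in the \textbf{if} is exactly this check; for the dual-Mordell step one uses that a genuine DHSVP improvement shortens $\vol(\basis_{[1,k+q]})$). Since $P(\basis)\ge 1$ always and $\log P(\basis_0)\le\poly(n,\log\|\basis_0\|)$, the number of improving iterations is at most $\log P(\basis_0)/\log(1+\eps)=\poly(n,\log\|\basis_0\|)/\eps$, which is polynomial for $\eps\ge 1/\poly(n)$. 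Each iteration makes $O(n)$ calls to the $\delta$-SVP oracle (the Alg.~\ref{alg:SDBKZ} invocations use $\poly(n,\log\|\basis_0\|)$ calls by Theorem~\ref{th:SDBKZ}), so the total number of oracle calls is polynomial. Combined with the complexity statement in Section~\ref{subsec:lattice reduction}, which ensures all intermediate bit-lengths stay polynomially bounded, this gives the claimed polynomial running time excluding oracle calls.

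The main obstacle I anticipate is the bookkeeping showing that every non-improving step genuinely leaves $P(\basis)$ fixed rather than slightly perturbing it. In particular, the HSVP/DHSVP reductions via Alg.~\ref{alg:SDBKZ} are invoked with a slack $\eps$, so I must argue carefully that (i) when they do not trigger the \textbf{if}-guarded replacement, the global basis (and hence $P$) is literally unchanged, and (ii) when they do trigger it, the relevant Gram--Schmidt norm drops by the full factor $(1+\eps)$ so that the potential drops multiplicatively. This is exactly the role played by the explicit comparison tests $(1+\eps)\|\vec{b}_{q+1}^*\|<\|\vec{c}_{q+1}^*\|$ in the $n\le 2k$ algorithm, transplanted here to the comparisons in Steps~\ref{step:dual_mordell} and~\ref{step:GSR_DSVP}; once this is pinned down, the rest is the same telescoping/potential argument as in~\cite{GNFindingShort08}.
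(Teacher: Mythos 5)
Your proposal is correct and follows essentially the same route as the paper's proof: argue that termination implies slide-reducedness, then bound the number of iterations via the integral potential $P(\basis) = \prod_{i=1}^{p-1}\vol(\basis_{[1,ik+q]})^2$, noting that primal steps preserve it while each guarded dual update decreases it by a factor of at least $(1+\eps)$. The bookkeeping concern you flag at the end is handled in the paper exactly as you suggest, via the explicit $(1+\eps)$ comparison tests guarding the dual replacements.
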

\begin{proof}
First, notice that if Algorithm~\ref{alg:GSR} terminates, then its output is $((1+\eps)\delta, k)$-slide-reduced. So, we only need to argue that the algorithm runs in polynomial time (excluding oracle calls).

Let $\basis_{0} \in \Z^{m \times n}$ be the input basis and let $\basis \in \Z^{m \times n}$
denote the current basis during the execution of Algorithm~\ref{alg:GSR}. 
As is common in the analysis of basis reduction algorithms \cite{LLLFactoringPolynomials82,GNFindingShort08,LNApproximatingDensest14}, 
we consider an integral potential of the form
\begin{equation*}
P(\basis) := \prod_{i=1}^{p-1} \vol(\basis_{[1,ik+q]})^{2}  \in \mZ^{+}.\label{eq:potential for GSR}
\end{equation*}
The initial potential satisfies $\log P(\basis_{0}) \leq 2n^{2}
\cdot \log \|\basis_{0}\|$, and every operation in Algorithm~\ref{alg:GSR} either
preserves or significantly decreases $P(\basis)$. In particular, the potential is unaffected by the primal steps (i.e., Steps~\ref{GSR:Mordell} and~\ref{GSR:SVP}), which leave $\vol(\basis_{[1,i k+q]})$ unchanged for all $i$. The dual steps (i.e., Steps~\ref{step:dual_mordell} and~\ref{step:GSR_DSVP}) either leave $\vol(\basis_{[1,i k+q]})$ for all $i$ or decrease $P(\basis)$ by a multiplicative factor of at least $(1+\eps)$.

Therefore, Algorithm~\ref{alg:BSR} updates $\vol(\basis_{[1,i k + q]})$ for some $i$ at most $\log P(\basis_{0})/\log
(1+\varepsilon)$ times. Hence, it makes at most $4pn^2  \log \|\basis_0\|/\log(1+\eps)$ calls to the SVP oracle in the SVP and DSVP reduction steps (i.e., Steps~\ref{GSR:SVP} and~\ref{step:GSR_DSVP}), and similarly at most $4n^2 \log \|\basis_0\|/\log(1+\eps)$ calls to Algorithm~\ref{alg:SDBKZ}. From the complexity statement in Section \ref{subsec:lattice reduction},  
it follows that Algorithm~\ref{alg:BSR} runs efficiently (excluding the running time of oracle calls), as needed.

\end{proof}

\begin{corollary}
\label{cor:main2}
For any constant $c \ge 1$ and $\delta := \delta(n) \geq 1$, there is an efficient reduction from $O(\delta^{2c+1}  n^c)$-SVP on lattices with rank $n$ to $\delta$-SVP on lattices with rank $k := \lfloor n/(c+1)\rfloor$. 
\end{corollary}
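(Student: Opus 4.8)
The plan is to mimic the proof of Corollary~\ref{cor:main1} but with the machinery for $n \geq 2k$ instead. First I would set $k := \lfloor n/(c+1) \rfloor$ and verify that this choice satisfies the hypothesis $n \geq 2k$ required by the algorithm for this regime: since $c \geq 1$, we have $k \leq n/(c+1) \leq n/2$, so indeed $n \geq 2k$. (One should also note $k \geq 2$ for $n$ large enough, which is the only interesting case.) Then the reduction runs Algorithm~\ref{alg:GSR} on the input lattice $\lat \subseteq \Z^m$ with slack, say, $\eps = 1/n$, producing a $((1+\eps)\delta, k)$-slide-reduced basis $\basis$ in polynomial time making polynomially many $\delta$-SVP oracle calls, by Theorem~\ref{th:correctness of G-slide-reduction}.

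Next I would extract the bound on $\|\vec{b}_1\|$ from Theorem~\ref{th:SR2}, applied with approximation factor $(1+\eps)\delta$ in place of $\delta$. There are two cases. If $\lambda_1(\lat(\basis_{[1,k+q]})) > \lambda_1(\lat)$, then Eq.~\eqref{eq:Gslide-reduction for ASVP} gives $\|\vec{b}_1\| \leq (1+\eps)\delta \cdot ((1+\eps)^2\delta^2 \gamma_k)^{\frac{n-k}{k-1}} \lambda_1(\lat)$, and $\vec{b}_1$ is a valid output. Otherwise, the shortest vector of $\lat$ lies in $\lat(\basis_{[1,k+q]})$ — but $k + q < 2k \leq n$ in general, so this is a lower-rank sublattice; however, we cannot directly call the rank-$k$ oracle on it since $k+q$ may exceed $k$. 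Here I would handle the degenerate case exactly as flagged in Footnote~\ref{foot:degenerate} and as done implicitly throughout: either argue that $\basis_{[1,k+q]}$ is itself (recursively, or via a direct application of Theorem~\ref{th:bi-slide_intro}/Corollary~\ref{cor:main1}-type reasoning since $k < k+q \le 2k$) amenable to the $n \le 2k$ machinery, or — more simply — observe that the reduction can first test whether shortest vectors are confined to a known low-dimensional subspace and recurse. The cleanest route is: since $k+q \le 2k$, apply the $n \le 2k$ reduction (Corollary~\ref{cor:main1} with its own constant, or just one extra oracle call on $\basis_{[1,k]}$ after noting $k+q-k = q < k$) to find a $\delta$-approximate shortest vector of $\lat(\basis_{[1,k+q]})$, which by assumption equals $\lambda_1(\lat)$ up to the $\delta$ factor.

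Then I would simply bound the resulting approximation factor. In the first case, $(1+\eps)^{2\cdot\frac{n-k}{k-1}+1} \delta^{2\frac{n-k}{k-1}+1} \gamma_k^{\frac{n-k}{k-1}}$: since $\eps = 1/n$ and the exponents are $O(1)$ (because $k = \Theta(n)$, so $\frac{n-k}{k-1} = \frac{n-k}{k-1}$ is a constant), the $(1+\eps)$ factors contribute only an $O(1)$ multiplicative loss; and $\frac{n-k}{k-1} \le c$ follows from $k = \lfloor n/(c+1)\rfloor \ge n/(c+1) - 1$, giving $\frac{n-k}{k-1} \le \frac{n - n/(c+1)}{n/(c+1) - 2} = \frac{cn/(c+1)}{n/(c+1)-2} \to c$, so this is $c + o(1)$; absorbing the lower-order terms into the constant and using $\gamma_k = O(k) = O(n)$ yields $O(\delta^{2c+1} n^c)$. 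The second case gives at most an $O(\delta\sqrt{\gamma_k}) = O(\delta \sqrt{n})$ factor (or the Corollary~\ref{cor:main1} bound), which is dominated by $O(\delta^{2c+1}n^c)$ for $c \ge 1$.

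The main obstacle I anticipate is the degenerate case: making precise how the reduction handles the possibility $\lambda_1(\lat(\basis_{[1,k+q]})) \le \lambda_1(\lat)$ without an oracle for rank $k+q > k$. This is exactly the subtlety acknowledged in Footnote~\ref{foot:degenerate}, and the honest fix — recursively invoking the $n\le 2k$ machinery on the rank-$(k+q)$ block, or preprocessing to detect a low-dimensional subspace containing a shortest vector — needs to be spelled out carefully so that the reduction remains efficient and the approximation factor bound is maintained. Everything else is a routine calculation once the exponent $\frac{n-k}{k-1}$ is bounded by $c + o(1)$ via the choice $k = \lfloor n/(c+1)\rfloor$.
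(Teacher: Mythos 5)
Your proposal is correct and follows essentially the same route as the paper: run Algorithm~\ref{alg:GSR}, use Eq.~\eqref{eq:Gslide-reduction for ASVP} of Theorem~\ref{th:SR2} when $\lambda_1(\lat(\basis_{[1,k+q]})) > \lambda_1(\lat)$, and handle the degenerate case by invoking the $n\le 2k$ machinery on a prefix block. The paper's concrete resolution of the degenerate case---which you circle around with several alternatives---is simply to always apply Corollary~\ref{cor:main1} with $c=1$ to $\lat(\basis_{[1,2k]})$ (rank exactly $2k$, which avoids any edge case with small $q$), obtaining $\vec{v}$ with $\|\vec{v}\| \le O(\delta^3 n)\lambda_1(\lat(\basis_{[1,2k]})) \le O(\delta^3 n)\lambda_1(\lat)$ whenever the shortest vector lies in $\lat(\basis_{[1,k+q]}) \subseteq \lat(\basis_{[1,2k]})$, and to output the shorter of $\vec{b}_1$ and $\vec{v}$ unconditionally, so no ``test'' is ever needed.
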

\begin{proof}
On input (a basis for) an integer lattice $\lat \subseteq \Z^m$ with rank $n$, the reduction first calls Algorithm~\ref{alg:GSR} to compute a $((1+\eps)\delta, k)$-slide-reduced basis $\basis=(\mathbf{b}_1, \cdots, \mathbf{b}_{n})$ of $\lat$  with, say, $\eps = 1/n$. Then, the reduction uses the procedure from Corollary~\ref{cor:main1} on the lattice $\lat(\basis_{[1,2k]})$ with $c = 1$ (i.e., slide reduction on a lattice with rank $2k$), to find a vector $\vec{v} \in \lat(\basis_{[1,2k]})$ with $0 < \|\vec{v}\| \leq O(\delta^{3}n) \lambda_1(\lat(\basis_{[1,2k]}))$.
Finally, the reduction outputs the shorter of the two vectors $\mathbf{b}_1$ and $\mathbf{v}$.

It is immediate from Corollary~\ref{cor:main1} and Theorem~\ref{th:correctness of G-slide-reduction} that this reduction is efficient. To prove correctness, we consider two cases. 

First, suppose that $\lambda_1(\lat(\basis_{[1,k+q]})) = \lambda_1(\lat)$.  Then, 
\[
\|\vec{v}\| \leq O(\delta^3 n) \lambda_1(\lat(\basis_{[1,2k]})) \leq O(\delta^{2c+1}  n^c) \lambda_1(\lat)
\; ,
\]
so that the algorithm will output a $O(\delta^{2c+1}  n^c)$-approximate shortest vector.

On the other hand, if $\lambda_1(\lat(\basis_{[1,k+q]})) > \lambda_1(\lat)$, then by Theorem~\ref{th:SR2}, we have
\[
\|\vec{b}_1\| \leq (1+\eps) \delta((1+\eps)^2\delta^2 \gamma_{k})^{\frac{n-k}{k-1}} \lambda_{1}(\lat) \leq O(\delta^{2c+1} n^c)
\; ,
\]
so that the algorithm also outputs a $O(\delta^{2c+1}  n^c)$-approximate shortest vector in this case.
\end{proof}

\bibliographystyle{alpha}

\appendix
\section{Properties of Gama and Nguyen's slide reduction}
\label{app:old_slide}

In the theorem below, $\basis_{[d+1,n]}$ is essentially just a slide-reduced basis in the sense of~\cite{GNFindingShort08}. So, the following is more-or-less just a restatement of~\cite[Theorem 1]{GNFindingShort08}.

\begin{theorem}\label{th:SR1}
Let $\basis = (\vec{b}_1,\ldots, \vec{b}_n) \in \R^{m \times n}$ with $n = pk + d$ for some $p \geq 1$ and $d \geq k$ be $(\delta,k)$-slide reduced in the sense of Definition~\ref{def:GSR}. Then,
\begin{align}
\|\mathbf{b}_{d+1}^*\| & \leq
(\delta^{2} \gamma_{k})^{ik/(k-1)} \|\mathbf{b}^*_{ik+d+1}\| \,\,\,\text{for}\,\,\, 0 \le i \le p-1 \;,
\label{eq:gap-SR1} \\
\|\mathbf{b}_{d+1}^*\| & \leq
(\delta^{2} \gamma_{k})^{\frac{n-d-1}{2(k-1)}}\vol(\basis_{[d+1,n]})^{1/(n-d)} \;, \text{ and}
\label{eq:HSVP-SR1} \\
\|\mathbf{b}_{d+1}\| & \leq
\delta(\delta^2 \gamma_{k})^{\frac{n-d-k}{k-1}} \lambda_{1}(\lat(\basis_{[d+1,n]})) \; .
\label{eq:ASVP-SR1}
    \end{align} 
\end{theorem}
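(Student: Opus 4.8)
The plan is to follow Gama and Nguyen's original analysis almost verbatim, observing that the blocks making up $\basis_{[d+1,n]}$ are exactly a (block-size-$k$) slide-reduced basis in their sense, shifted to start at index $d+1$. First I would fix notation: since $n = pk+d$, the projected block $\basis_{[d+1,n]}$ has rank $pk$ and decomposes into the $p$ disjoint ``primal'' blocks $P_j := \basis_{[jk+d+1,\,(j+1)k+d]}$ for $0 \le j \le p-1$ together with the $p-1$ overlapping ``dual'' blocks $D_j := \basis_{[jk+d+2,\,(j+1)k+d+1]}$ for $0 \le j \le p-2$. Matching indices against the primal and dual conditions of Definition~\ref{def:GSR} (whose ``$q$'' equals $d-k$ here, and whose ``$p$'' is one more than the $p$ of this theorem), the slide-reducedness hypothesis says precisely that each $P_j$ is $\delta$-SVP-reduced and each $D_j$ is $\delta$-DSVP-reduced. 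By the definition of Hermite's constant a $\delta$-SVP-reduced block of rank $k$ is $\delta\sqrt{\gamma_k}$-HSVP-reduced, and dually a $\delta$-DSVP-reduced block of rank $k$ is $\delta\sqrt{\gamma_k}$-DHSVP-reduced; hence for each $0 \le j \le p-2$ the glued block $\basis_{[jk+d+1,\,(j+1)k+d+1]}$ (of rank $k+1$) is $(\delta\sqrt{\gamma_k})$-twin-reduced in the sense of Section~\ref{subsec:lattice reduction}.

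For Eq.~\eqref{eq:gap-SR1} I would apply Eq.~\eqref{eq:twin_decay} of Fact~\ref{fact:twinsies} to each of these glued blocks, with the Fact's ``$d$'' equal to $k$ and its ``$\delta$'' equal to $\delta\sqrt{\gamma_k}$: this gives $\|\vec{b}_{jk+d+1}^*\| \le (\delta\sqrt{\gamma_k})^{2k/(k-1)}\|\vec{b}_{(j+1)k+d+1}^*\| = (\delta^2\gamma_k)^{k/(k-1)}\|\vec{b}_{(j+1)k+d+1}^*\|$ for every $0 \le j \le p-2$. Telescoping this one-step inequality from $j=0$ up to $j=i-1$ yields $\|\vec{b}_{d+1}^*\| \le (\delta^2\gamma_k)^{ik/(k-1)}\|\vec{b}_{ik+d+1}^*\|$ for all $0 \le i \le p-1$, which is Eq.~\eqref{eq:gap-SR1}.

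For Eq.~\eqref{eq:HSVP-SR1} I would induct on $p$ using Item~\ref{item:piling_HSVP} of Lemma~\ref{lem:pilingup}: with $\eta := (\delta^2\gamma_k)^{1/(2(k-1))}$, the base case is that $P_0$ is $\eta^{k-1}$-HSVP-reduced (Hermite again), the inductive hypothesis gives that $\basis_{[k+d+1,n]}$ is $\eta^{(p-1)k-1}$-HSVP-reduced, and the one-step case of Eq.~\eqref{eq:gap-SR1} ($\|\vec{b}_{d+1}^*\| \le \eta^{2k}\|\vec{b}_{k+d+1}^*\|$) supplies the remaining hypothesis, so $\basis_{[d+1,n]}$ is $\eta^{pk-1} = (\delta^2\gamma_k)^{(n-d-1)/(2(k-1))}$-HSVP-reduced; alternatively one can argue directly by writing $\vol(\basis_{[d+1,n]}) = \prod_{j=0}^{p-1}\vol(P_j)$, bounding $\vol(P_j) \ge \big(\|\vec{b}_{jk+d+1}^*\|/(\delta\sqrt{\gamma_k})\big)^k$, substituting Eq.~\eqref{eq:gap-SR1}, and summing the resulting arithmetic progression in the exponent. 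For Eq.~\eqref{eq:ASVP-SR1} I would take a shortest nonzero vector $\vec{u}$ of $\lat(\basis_{[d+1,n]})$ and let $j$ be the largest index with $\pi_{jk+d+1}(\vec{u}) \ne 0$; a short linear-algebra argument (the vectors $\pi_{(j+1)k+d+1}(\vec{b}_\ell)$ for $\ell \ge (j+1)k+d+1$ are linearly independent) shows that $\pi_{jk+d+1}(\vec{u}) \in \lat(P_j)$, so $\delta$-SVP-reducedness of $P_j$ gives $\|\vec{b}_{jk+d+1}^*\| \le \delta\|\pi_{jk+d+1}(\vec{u})\| \le \delta\|\vec{u}\| = \delta\,\lambda_1(\lat(\basis_{[d+1,n]}))$; combining with Eq.~\eqref{eq:gap-SR1} and $j \le p-1$ (so $jk \le n-d-k$) yields $\|\vec{b}_{d+1}^*\| \le \delta(\delta^2\gamma_k)^{(n-d-k)/(k-1)}\lambda_1(\lat(\basis_{[d+1,n]}))$, which is Eq.~\eqref{eq:ASVP-SR1} (here $\|\vec{b}_{d+1}\|$ is read as $\|\pi_{d+1}(\vec{b}_{d+1})\| = \|\vec{b}_{d+1}^*\|$, the norm of the first vector of the projected block).

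I do not expect a genuine obstacle here, since this is essentially a restatement of~\cite[Theorem~1]{GNFindingShort08}; the only points needing care are (a) keeping the index arithmetic consistent under the reparametrization $n = pk+d$ versus the $n = pk+q$ of Definition~\ref{def:GSR}, so that the blocks $P_j, D_j$ line up with the right conditions, and (b) in Eq.~\eqref{eq:ASVP-SR1}, justifying that the shortest vector of the projected lattice, after projecting past the first $j$ blocks, lands inside the single block $\lat(P_j)$ rather than straddling several blocks — this is the one place a short argument beyond bookkeeping is required.
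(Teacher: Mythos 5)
Your proposal is correct, and for Eqs.~\eqref{eq:gap-SR1} and~\eqref{eq:HSVP-SR1} it is essentially identical to the paper's proof: the same identification of the primal/dual blocks of $\basis_{[d+1,n]}$ with those of Definition~\ref{def:GSR} (your index reparametrization $q = d-k$, $p_{\mathrm{def}} = p+1$ is right), the same application of Eq.~\eqref{eq:twin_decay} to the overlapping rank-$(k+1)$ blocks followed by telescoping, and the same induction via Item~\ref{item:piling_HSVP} of Lemma~\ref{lem:pilingup} (your alternative direct volume computation is also fine). The one genuine divergence is Eq.~\eqref{eq:ASVP-SR1}: the paper proves it by induction on $p$, invoking Item~\ref{item:piling_SVP} of the gluing lemma at each level together with a case distinction on whether $\lambda_1(\lat(\basis_{[d+1,n]}))$ is already attained inside the first block, whereas you argue directly by taking a shortest nonzero $\vec{u}$ and locating the last block $P_j$ onto which it projects nontrivially, so that $\|\vec{b}_{jk+d+1}^*\| \le \delta\|\vec{u}\|$ and Eq.~\eqref{eq:gap-SR1} finishes the job. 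The two arguments are equivalent in substance (the gluing lemma's Item~\ref{item:piling_SVP} is exactly one step of your ``last nonzero projection'' argument), but your direct version is the classical Gama--Nguyen presentation: it handles all cases uniformly without the degenerate-case branching, and it incidentally avoids the arithmetic slip in the paper's displayed computation of $\delta'$ (where the stated product $(\delta^2\gamma_k)^{k/(k-1)}\cdot\delta(\delta^2\gamma_k)^{(n-d-k)/(k-1)}$ should have $(n-d-2k)/(k-1)$ in the second exponent for the claimed equality to hold). Your reading of $\|\vec{b}_{d+1}\|$ as $\|\vec{b}_{d+1}^*\|$, the first vector of the projected block, is also the intended one.
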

\begin{proof}
By definition,
for each $i \in [0, p-2]$,
the block $\basis_{[ik+d+1,(i+1)k+d+1]}$ is $\delta \sqrt{\gamma_k}$-twin reduced.
By Eq.~\eqref{eq:twin_decay} of Fact~\ref{fact:twinsies}, we see that
\[
 \|\vec{b}_{(i-1)k+d+1}\| \le (\delta \sqrt{\gamma_k})^{2k/(k-1)} \|\vec{b}^*_{ik+d+1}\|
 \; ,
 \]
 which implies \eqref{eq:gap-SR1} by induction.
 
We prove \eqref{eq:HSVP-SR1} and  \eqref{eq:ASVP-SR1} by induction over $p$.
If $p=1$, then both inequalities hold as $\basis_{[d+1,n]}$ is $\delta$-SVP reduced by the definition of slide reduction.
Now, assume that Eqs.~\eqref{eq:HSVP-SR1} and~\eqref{eq:ASVP-SR1} hold for $p-1 \ge 1$.
Then $\basis$ satisfies the requirements of the theorem with $d' := d+k$ and $p' := p-1$. Therefore, by the induction hypothesis, we have
\begin{align*}
    \|\mathbf{b}_{d+k+1}^*\| & \leq
(\delta^{2} \gamma_{k})^{\frac{n-d-k-1}{2(k-1)}}\vol(\basis_{[d+k+1,n]})^{1/(n-d-k)} \;, \text { and}
 \\
\|\mathbf{b}_{d+k+1}\| & \leq
\delta(\delta^2 \gamma_{k})^{\frac{n-3k-\ell}{k-1}} \lambda_{1}(\lat(\basis_{[k+d+1,n]})) \; .
\end{align*}
Since $\basis_{[d+1,d+k]}$ is $\delta \sqrt{\gamma_k}$-HSVP reduced,
we may apply Lemma~\ref{lem:pilingup},
which proves \eqref{eq:HSVP-SR1} for $\basis_{[d+1,n]}$.

Furthermore, if $\lambda_1(\lat(\basis_{[d+1,n]})) < \lambda_1(\lat(\basis_{[d+1,d+k+1]}))$,
then $\basis_{[d+1,n]}$ is $\delta'$-SVP-reduced for
\[
\delta' = (\delta^{2} \gamma_{k})^{k/(k-1)} \cdot \delta(\delta^2 \gamma_{k})^{\frac{n-d-k}{k-1}} = \delta(\delta^2 \gamma_{k})^{\frac{n-d-k}{k-1}}
\; ,
\]
as needed.
If not, then $\lambda_1(\lat(\basis_{[d+1,n]})) = \lambda_1(\lat(\basis_{[d+1,d+k+1]}))$,
and $\|\vec{b}_1\| \le \delta \lambda_1(\lat(\basis_{[d+1,n]}))$
because $\basis_{[d+1,d+k+1]}$ is $\delta$-SVP reduced.
In all cases, we proved \eqref{eq:ASVP-SR1}.
\end{proof}

\end{document}